\newtheorem{lemma}{Lemma}
\newtheorem{definition}{Definition}
\newtheorem{proposition}{Proposition}
\newcommand{\pds}[1]{\ensuremath{\widetilde{#1}}}
\title{\huge Structural Properties of Optimal Transmission Policies\\ for Delay-Sensitive Energy Harvesting Wireless Sensors\vspace{-0.2cm}}
\author{\IEEEauthorblockN{Nikhilesh Sharma\IEEEauthorrefmark{1}, Nicholas Mastronarde\IEEEauthorrefmark{1}\thanks{The work of N. Mastronarde and J. Chakareski was supported in part by the NSF under awards ECCS-1711335 and ECCS-1711592, respectively.}, and Jacob Chakareski\IEEEauthorrefmark{2}\vspace{-0.3cm}} \\
\IEEEauthorblockA{\IEEEauthorrefmark{1}Dept. Electrical Engineering, U. Buffalo, \IEEEauthorrefmark{2}Dept. Electrical and Computer Engineering, U. Alabama\vspace{-0.4cm}}}
\begin{document}
\maketitle

\begin{abstract}
We consider an energy harvesting sensor transmitting latency-sensitive data over a fading channel. We aim to find the optimal transmission scheduling policy that minimizes the packet queuing delay given the available harvested energy. We formulate the problem as a Markov decision process (MDP) over a state-space spanned by the transmitter's buffer, battery, and channel states, and analyze the structural properties of the resulting optimal value function, which quantifies the long-run performance of the optimal scheduling policy. We show that the optimal value function (i) is non-decreasing and has increasing differences in the queue backlog; (ii) is non-increasing and has increasing differences in the battery state; and (iii) is submodular in the buffer and battery states. Our numerical results confirm these properties and demonstrate that the optimal scheduling policy outperforms a so-called greedy policy in terms of sensor outages, buffer overflows, energy efficiency, and queuing delay.
\end{abstract}


\section{Introduction}
\label{sec:intro}
Energy-constrained wireless sensors are increasingly used for latency-sensitive applications such as real-time remote visual sensing~\cite{Chakareski:15}, Internet of Things (IoT), body sensor networks~\cite{seyedi2010energy}, smart grid monitoring, and cyber-physical systems. However, these sensors are subject to time-varying channel conditions and generate stochastic traffic loads,
which makes it very challenging for them to provide the necessary Quality of Service (QoS) to support latency-sensitive applications.
This is further complicated by the introduction of wireless sensors powered by energy harvested from the environment (e.g., ambient light or RF energy~\cite{vullers2010energy}). Although energy harvesting  sensors (EHSs) can operate autonomously in (possibly remote) areas without access to power lines and without the need to change their batteries, the stochastic nature of harvested energy sources poses new challenges in sensor power management, transmission power allocation, and transmission scheduling.

An important body of work focuses on offline computation of optimal transmission policies for EHSs~\cite{gurakanenergy,lu2014dynamic,sharma2010optimal,gunduz2014designing}. In particular, \cite{gurakanenergy} considers a multi-access channel with two EHSs and derives the optimal offline transmission power and rate allocations that maximize the sum rate, given a priori known energy and traffic arrival processes. 
\cite{gunduz2014designing} identifies Markov decision processes (MDPs~\cite{puterman2014markov}) as a useful tool for optimizing EHSs in unpredictable environments with only causal information about the past and present. \cite{sharma2010optimal} formulates both throughput-optimal and delay-optimal energy management policies as MDPs.
Though these studies identify numerous techniques for calculating optimal policies, they do not provide general insights into their structures.

Another body of work focuses on characterizing the structure of optimal transmission policies for EHSs~\cite{seyedi2010energy,ozel2011transmission,ho2012optimal,yang2012optimal1,yang2012optimal,aprem2013transmit}. 
Numerous studies have shown that optimal power allocation policies for EHSs have various water-filling structures~\cite{ozel2011transmission,ho2012optimal,yang2012optimal1}.
Other types of structural results are derived in~\cite{yang2012optimal,aprem2013transmit}. 
In particular, \cite{yang2012optimal} assumes that known amounts of data and  energy arrive over a finite time horizon, and aims to minimize the total amount of time to transmit all data. They show that the optimal policy uses transmission rates that increase over time. 
\cite{aprem2013transmit} formulates outage-optimal power control policies for EHSs, showing that the optimal policy for the underlying MDP is threshold in the battery state for the special case of binary transmission power levels. 

We study an EHS transmitting delay-sensitive data over a fading channel. We assume that it uses a fixed transmission power, can transmit at most one packet in each time slot, and experiences a variable packet loss rate depending on the channel conditions. Under these assumptions, we aim to understand the structure of optimal transmission scheduling policies that minimize the packet queuing delay given the available harvested energy. Our contributions are as follows:
\begin{itemize}
\item We formulate the delay-sensitive energy harvesting scheduling (DSEHS) problem as an MDP that takes into account the stochastic traffic load, harvested energy, and channel conditions experienced by the EHS.
\item We show that the optimal value function, which quantifies the long-run performance of the optimal scheduling policy, (i) is non-decreasing and has increasing differences in the queue backlog; (ii) is non-increasing and has increasing differences in the battery state; and (iii) is submodular in the buffer and battery states.
\item  Our numerical results confirm these properties and demonstrate that the optimal scheduling policy outperforms a so-called greedy policy in terms of sensor outage, buffer overflow,  energy efficiency, and queuing delay.
\end{itemize}

Our advances can facilitate \textit{online learning} of optimal policies at lower complexity and enable efficient self-organizing operation of next generation IoT sensing systems (see, e.g.,~\cite{toorchi2016fast}). Such studies fall outside the scope of our paper.

The remainder of this paper is organized as follows. We introduce the system model in Section~\ref{sec:sys}, formulate the DSEHS problem in Section~\ref{sec:formulation}, analyze the structural properties of the DSEHS problem in Section~\ref{sec:struct-results}, present our numerical results in Section~\ref{sec:sim}, and conclude in Section~\ref{sec:con}.
\section{Wireless Sensor Model}
\label{sec:sys}

We consider a time-slotted single-input single-output (SISO) point-to-point wireless communication system in which an energy harvesting sensor transmits latency-sensitive imagery data over a fading channel. The system model is depicted in Fig. 1. The system comprises two buffers: a packet buffer with (possibly infinite) size $N_b$ and an energy buffer (battery) with finite size $N_e$. We assume that time is divided into slots with length $\Delta T$ (seconds) and that the system's state in the $n$th time slot is denoted by $s^{n}\triangleq(b^{n},e^{n},h^{n})\in \mathcal{S}$, where $b^{n}\in \mathcal{S}_b=\left\{ 0,1,...,N_b\right\}$ is the packet buffer state (i.e., the number of backlogged data packets), $e^{n}\in \mathcal{S}_e=\left\{ 0,1,...,N_e\right\}$ is the battery state (i.e., the number of energy packets in the battery), and $h^{n} \in \mathcal{S}_h$ is the channel fading state. At the start of the $n$th time slot, the transmission scheduler observes the state of the system and takes the binary scheduling action $a^{n} \in \mathcal{A} = \{0,1\}$, where $a^{n} = 1$ indicates that it transmits the head-of-line packet in the queue and $a^{n} = 0$ otherwise.

\subsection{Channel model}
We assume a block-fading channel, meaning that the channel is constant during each time slot and may change from one slot to the next. Similar to prior work \cite{zordandesign,sharma2010optimal,ozel2011transmission,mastronarde2011fast,djonin2007mimo}, we assume that the channel state $h^{n}\in \mathcal{S}_h$ is known to the transmitter at the start of each time slot, that $\mathcal{S}_h$ denotes a finite set of $N_h$ channel states, and that the evolution of the channel state can be modeled as a finite state Markov chain with transition probability function $P^{h}(h^\prime|h)$.

\begin{figure}
\centering
  \includegraphics[width=0.9\linewidth]{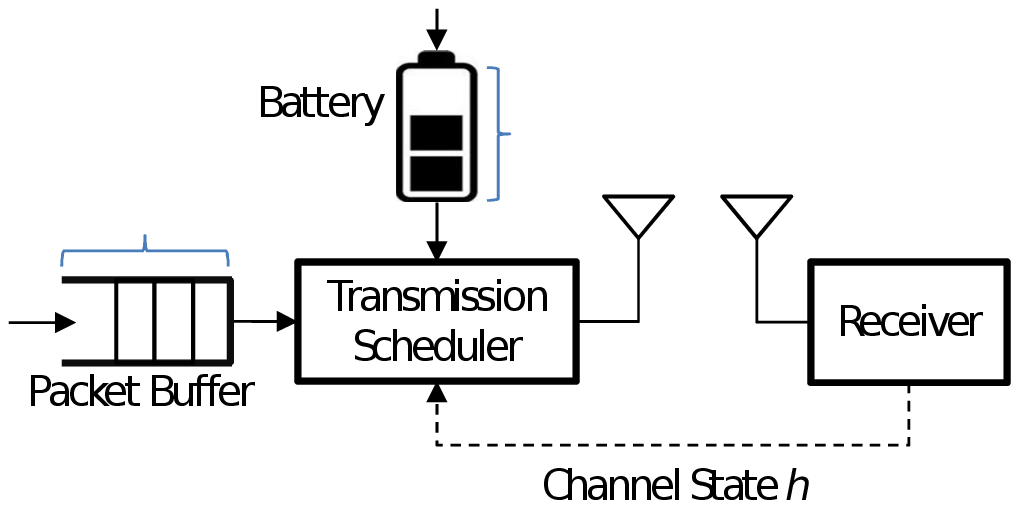}
  \caption{System block diagram.}
  \label{fig:1}
  \vspace{-0.5cm}
\end{figure}

\subsection{Energy harvesting model}
Similar to prior work \cite{zordandesign,aprem2013transmit}, we assume that battery energy is stored in the form of energy packets.
Let $e_{H}^{n} \in \mathcal{E} =\{0,1,\ldots,M_e\}$ denote the number of energy packets that are available for harvesting in the $n$th time slot and let $P^{e_H}(e_{H})$ denote the energy packet arrival distribution. Energy packets that arrive in time slot $n$ can be used in future time slots. Therefore, the battery state at the start of time slot $n+1$ can be found through the following recursion:
\begin{equation}\label{eq:e-recursion}
e^{n+1}=\min(e^{n}-e_{\text{TX}}(a^n)+e_{H}^{n},N_e),
\end{equation}
\noindent
where $e_{\text{TX}}(a^n)$ denotes the number of energy packets consumed in time slot $n$ given the scheduling action $a^n$. We assume that the wireless sensor uses a fixed transmission power $P_\text{TX}$ (energy packets per second); therefore,
\begin{equation}
	e_{\text{TX}}(a^n) = a^n P_\text{TX} \Delta T  = a^n e_{\text{TX}}\mbox{ (energy packets)}.
\end{equation}
For simplicity, we assume that the transmission energy $e_{\text{TX}}$ is an integer multiple of energy packets. Note that the transmission action $a^n$ in time slot $n$ cannot use more energy than is available in the battery, i.e., $a^{n} e_{\text{TX}} \leq e^{n}$.

Given the current state $s = (b,e,h)$ and action $a$, the probability of observing battery state $e^\prime$ in the next slot is:
\begin{equation}\label{eq:battery-tpf}
	P^{e}(e^\prime|e,a) = \mathbb{E}_{e_H}[\mathbb{I}_{\{e^\prime = \min(e-a \cdot e_{\text{TX}}+e_{H},N_e)\}}],
\end{equation}
where $\mathbb{I}_{\{\cdot\}}$ is an indicator variable that is set to 1 when ${\{\cdot\}}$ is true and is set to 0 otherwise.



\subsection{Traffic model}
Let $l^{n} \in \mathcal{L} = \{0,1,\ldots,M_l\}$ denote the number of data packets generated by the sensor in the $n$th time slot and let $P^l(l)$ denote the data packet arrival distribution. The buffer state in slot $n+1$ can be found through the following recursion:
\begin{equation}\label{eq:b-recursion}
b^{n+1}=\min(b^{n}-f^{n}(a^n,h^n)+l^{n},N_b),
\end{equation}
\noindent
where $f^{n}(a^n,h^n)$ is the number of packets transmitted successfully in time slot $n$ and $f^{n}(a^n,h^n) \leq a^{n} \leq b^n$. Note that new packet arrivals, and packets that are not successfully received, must be (re)transmitted in a future time slot. Assuming independent and identically distributed (i.i.d.) bit errors, we can characterize $f^n$ as a binary random variable with conditional probability mass function
\begin{equation}\label{eq:goodput-pmf}
	P^f(f|a,h) =
    \begin{cases}
    	1, & \mbox{if $f = 0$ and $a = 0$}, \\
        0, & \mbox{if $f = 1$ and $a = 0$}, \\
        q(h), & \mbox{if $f = 0$ and $a = 1$}, \\
        1-q(h), & \mbox{if $f = 1$ and $a = 1$},
    \end{cases}
\end{equation}
where $q(h)$ is the packet loss rate (PLR) in channel state $h$. Since the transmission power is fixed, $q(h^+) < q(h^-)$ if $h^+ > h^-$.
We will refer to $P^f(f|a,h)$ as the \textit{goodput distribution}. 

Given the current state $s = (b,e,h)$ and action $a$, the probability of observing buffer state $b^\prime$ in the next time slot is:
\begin{equation}\label{eq:buffer-tpf}
	P^{b}(b^\prime|[b,h],a) = \mathbb{E}_{f,l}[\mathbb{I}_{\{b^\prime = \min(b-f+l,N_b)\}}]. 
\end{equation}
%
\section{The Delay-Sensitive Energy-Harvesting Scheduling (DSEHS) Problem}
\label{sec:formulation}

Let $\pi : \mathcal{S} \rightarrow \mathcal{A}$ denote a \textit{policy} that maps states to actions. The objective of the DSEHS problem is to determine the optimal policy $\pi^*$ that minimizes the average packet queuing delay given the available energy. However, this does not mean that the policy should greedily transmit packets whenever there is enough energy to do so. Instead, it may be beneficial to abstain from transmitting packets in bad channel states and wait to transmit them in good channel states to reduce retransmissions and conserve scarce harvested energy. On the other hand, the policy should not be too conservative. Instead, if the battery is (nearly) full, transmitting a packet will make room for more harvested energy, which otherwise would be lost due to the finite battery size.
To balance these considerations, we formulate the DSEHS problem as an MDP~\cite{puterman2014markov}.

We define a \textit{buffer cost} to penalize large queue backlogs. Formally, we define the buffer cost as the sum of the \textit{holding cost} and the expected \textit{overflow cost} with respect to the arrival and goodput distributions, i.e., 
\begin{equation}\label{eq:cost}
  c([b,h],a) = b + \mathbb{E}_{f,l}[\{\eta\max(b-f+l-N_b,0)\}],
\end{equation}
In~\eqref{eq:cost}, the holding cost is equal to the buffer backlog, which is proportional to the queuing delay by Little's theorem~\cite{bertsekas1987data}. The overflow cost imposes a penalty $\eta$ for each dropped packet.

Formally, the DSEHS problem's objective is to determine the scheduling policy that solves the following optimization:
\begin{equation}
\begin{aligned}
& \underset{\pi \in \Pi}{\text{minimize}}
& &  \mathbb{E}\left[\sum\nolimits_{n=0}^{\infty}(\gamma)^{n}c(s^{n},\pi(s^{n}))\right],\\
\end{aligned}\label{eq:discounted_cost}
\end{equation}
where $\gamma \in [0, 1)$ is the discount factor, $\Pi$ is the set of all possible policies, and the expectation is taken over the sequence of states, which are governed by a controlled Markov chain with transition probabilities:
\begin{align}
P(s^\prime|s,a)=P^{b}(b^\prime|[b,h],a)  P^{h}(h^\prime|h)  P^{e}(e^\prime|e,a).\label{eq:tpf}
\end{align}
The optimal solution to~\eqref{eq:discounted_cost} satisfies the following Bellman equation, $\forall s \in \mathcal{S}$:
\begin{align}
\lefteqn{V^{*}(s)} \nonumber \\
&= \min_{a \in \mathcal{A}(s)}  \biggl\{c(s, a)+\gamma\sum_{s^\prime \in \mathcal{S}}P(s^\prime | s, a)V^{*}(s^\prime)\biggr\}, \nonumber \\
&= \min_{a\in \mathcal{A}(b, e)}  \biggl\{ c([b, h], a) + \gamma \mathbb{E}_{l,f,e_H,h'} \nonumber \\
& \quad  [V^{*}(\min(b - f + l, N_b), \min(e - a \cdot e_\text{TX} + e_H, N_e), h^\prime)] \biggr\} \nonumber \\
&\triangleq \min_{a \in \mathcal{A}(s)}  Q^{*}(s, a),  \label{eq:value}
\end{align}

\noindent where $\mathcal{A}(b, e)$ is the set of feasible actions given the buffer and battery states (i.e., $\mathcal{A}(b, e) = \{0, 1\}$ if $b > 0$ and $e \geq e_{TX}$, and is $\{0\}$ otherwise), $V^{*}(s)$ is the optimal \textit{state-value function}, and $Q^{*}(s,a)$ is the optimal \textit{action-value function}. The optimal policy $\pi^{*}(s)$ can be determined by taking the action in each state that minimizes the r.h.s. of~\eqref{eq:value}.

\subsection{Post-Decision State Based Dynamic Programming}\label{sec:pds-dp}

We will find it useful throughout our analysis to work with so-called post-decision states (PDSs) rather than conventional states. A PDS, $\pds{s} \triangleq (\pds{b}, \pds{e}, \pds{h}) \in \mathcal{S}$, denotes a state of the system after all known dynamics have occurred, but before the unknown dynamics occur~\cite{mastronarde2011fast}. In the DSEHS problem, 
\begin{equation}\label{eq:pds}
\tilde{s}^{n}=(\pds{b}^n,\pds{e}^n,\pds{h}^n)=([b^{n}-f^{n}],[e^{n}-a^{n} \cdot e_{\text{TX}}],h^{n})
\end{equation}
is the PDS in time slot $n$. 
The buffer's PDS $\pds{b}^n=b^{n}-f^{n}$ characterizes the buffer state after a packet is transmitted (if any), but
before any new packets arrive; the battery's PDS $\pds{e}^n=e^{n}-a^{n} \cdot e_{\text{TX}}$ characterizes the battery state after an energy packet is consumed (if any), but before any new energy packets arrive; and the channel's PDS $\pds{h}^{n} = h^{n}$ is the same as the channel state at time $n$. In other words, the PDS incorporates all of the known information about the transition from state $s^n$ to state $s^{n+1}$ after taking action $a^n$. Meanwhile, the unknown dynamics in the transition from state $s^{n}$ to $s^{n+1}$, i.e., the channel state transition from $h^n$ to $h^{n+1} \sim P^h(\cdot|h^{n})$, the data packet arrivals $l^{n} \sim P^l(\cdot)$, and the energy packet arrivals $e_{H}^{n} \sim P^{e_H}(\cdot)$ are not included in the PDS. Importantly, the next state can be expressed in terms of the PDS as follows:
\begin{align}
s^{n + 1} &= (b^{n + 1}, e^{n + 1}, h^{n + 1}) \nonumber \\
&= (\min(\pds{b}^n + l^n, N_b), \min(\pds{e}^n + e_H^n, N_e), h^{n + 1}).
\end{align}

Just as we defined a value function over the conventional states, we can define a PDS value function over the PDSs. Let $\pds{V}^{*}$ denote the optimal PDS value function. $\pds{V}^{*}$ and $V^{*}$ are related by the following Bellman equations:
\begin{multline}\label{eq:V_to_PDSV}
\pds{V}^{*}(\pds{s}) = \eta \mathbb{E}_l [\max(\pds{b} + l - N_b, 0)] + \\
\gamma \mathbb{E}_{l,e_H,h'}[V^{*}(\min(\pds{b} + l, N_b), \min(\pds{e} + e_H, N_e), h^\prime)] 
\end{multline}
\begin{equation}\label{eq:PDSV_to_V}
V^{*}(s) = 
\min_{a \in \mathcal{A}(b, e)} \left\{b + \mathbb{E}_f [\pds{V}^{*}(b - f, e - a \cdot e_{TX}, h)]\right\}
%
\end{equation}
Knowing $\pds{V}^{*}(\pds{s})$, $\pi^{*}(s)$ can be found by taking the action in each state that minimizes the r.h.s. of \eqref{eq:PDSV_to_V}.

Algorithm~\ref{alg:pds-value-iter} presents a value iteration algorithm for computing the PDS value function \textit{offline}. Although it is too complex to be implemented on an EHS, its iterative structure facilitates the use of mathematical induction to derive structural properties of the optimal PDS value function $\pds{V}^{*}(\pds{s})$ (see Section~\ref{sec:struct-results}). 


\begin{algorithm}
\caption{Post-Decision State Value Iteration}
\label{alg:pds-value-iter}
\begin{algorithmic}[1]
\State \textbf{initialize} $\pds{V}_0(\pds{b}, \pds{e}, \pds{h}) = 0$ for all $(\pds{b}, \pds{e}, \pds{h}) \in \mathcal{S}$ and $\tau = 0$
\Repeat
\State $\Delta \leftarrow 0$
\For {$(b, e, h) \in \mathcal{S}$}
\State Update the value function:
\begin{multline}\label{eq:update-v}
V_{\tau}(b,e,h) \leftarrow \\
\min_{a \in \mathcal{A}(b, e)} \biggl\{b + \mathbb{E}_f [\pds{V}_{\tau}(b - f, e - a \cdot e_{TX}, h)]\biggr\}
\end{multline}
\EndFor
\For {$(\pds{b}, \pds{e}, \pds{h}) \in \mathcal{S}$}
\State Update the PDS value function:
\begin{multline}
\pds{V}_{\tau + 1}(\pds{b}, \pds{e}, \pds{h}) \leftarrow \eta \mathbb{E}_l [\max(\pds{b} + l - N_b, 0)] + \\ 
\gamma \mathbb{E}_{l,e_H,h'} [V_{\tau}(\min(\pds{b} + l, N_b), \min(\pds{e} + e_H, N_e), h^\prime)]
\end{multline}
\State $\Delta \leftarrow \max(\Delta, |\pds{V}_{\tau}(\pds{b}, \pds{e}, \pds{h}) - \pds{V}_{\tau + 1}(\pds{b}, \pds{e}, \pds{h})|)$
\EndFor
\State $\tau \leftarrow \tau + 1$
\Until {$\Delta < \theta$ (a small positive constant)}
\end{algorithmic}
\end{algorithm}

\section{Structural Properties}\label{sec:struct-results}
In this section, we analyze the structural properties of the optimal PDS value function $\pds{V}^{*}(s)$. Understanding
such properties is important because: 
(i) they provide insights into the optimization problem and the system being optimized; 
(ii) they reveal ways in which the solution can be represented compactly, with limited memory; and 
(iii) they can facilitate efficient \textit{online} computation of the optimal policy using reinforcement learning (see, e.g.,~\cite{toorchi2016fast,mastronarde2011fast}).
In this paper, we focus on point (i) above.
We begin by introducing three important definitions and providing an overview of our results. Then, in Section~\ref{sec:properties-cost-tpf}, we analyze the properties of the cost and transition probability functions and, in Section~\ref{sec:V-structure}, we analyze several key properties of the conventional value function. These properties are all needed to prove our main results, which are presented in Section~\ref{sec:PDSV-structure}.


The first useful definition is that of integer convexity.

\begin{definition}
(Integer Convex~\cite{djonin2007mimo}): An integer convex function $f(n): \mathcal{N} \rightarrow \mathbb{R}$ on a set of integers $\mathcal{N} \in \{0, 1, \ldots ,N\}$ is a function that has increasing differences in $n$, i.e., 
\begin{equation}
	f(n_1 + m) - f(n_1) \leq f(n_2 + m) - f(n_2)
\end{equation}
for $n_1 < n_2$, $n_1, n_2, n_1 + m, n_2 + m \in \mathcal{N}$. 
\end{definition} 

Our main results establish that the PDS value function has increasing differences in the PDS buffer state $\pds{b}$ (Proposition~\ref{prop:incr-diff-PDSV-b}) and the PDS battery state $\pds{e}$ (Proposition~\ref{prop:incr-diff-PDSV-e}).

The second useful definition is that of stochastic dominance.

\begin{definition}
(Stochastic Dominance~\cite{djonin2007mimo}): Let $\theta(x)$ be a random variable parameterized by some $x \in \mathbb{R}$. If $P(\theta(x_1) \geq a) \geq P(\theta(x_2) \geq a)$ for all $x_1 \geq x_2$ and for all $a \in \mathbb{R}$, then we say that $\theta(x)$ is first-order stochastically increasing in $x$. If $\theta(x)$ is first-order stochastically increasing in $x$, then
\begin{equation}\label{eq:stochastic_increasing}
\mathbb{E}[u(\theta(x_1))] \geq \mathbb{E}[u(\theta(x_2))] 
\end{equation}
for all non-decreasing functions $u(x)$. The reverse inequality holds for all non-increasing functions $u(x)$.
\end{definition}

In Section~\ref{sec:properties-cost-tpf}, we establish that the buffer and battery state transition probabilities defined in~\eqref{eq:buffer-tpf} and~\eqref{eq:battery-tpf}, respectively, are first-order stochastically increasing in the buffer state $b$ and the battery state $e$, respectively. In Section~\ref{sec:V-structure}, we use these properties -- combined with~\eqref{eq:stochastic_increasing} -- to show that the value function is non-decreasing in the buffer state $b$ and is non-increasing in the battery state $e$. These results help us establish integer convexity of the PDS value function in Section~\ref{sec:PDSV-structure}.

Lastly, we define the concept of a submodular function.

\begin{definition}
(Submodular \cite{puterman2014markov}): A submodular function $f(x,y):\mathcal{X} \times \mathcal{Y} \rightarrow \mathbb{R}$ on sets of integers $\mathcal{X} \in \{0, 1, \ldots, X\}$ and $\mathcal{Y} \in \{0, 1, \ldots, Y\}$ is a function that has decreasing differences in $(x, y)$, i.e., for $x^+ \geq x^-$ and $y^+ \geq y^-$
\begin{equation}\label{eq:submodular}
f(x^+, y^+) - f(x^+, y^-) \leq f(x^-, y^+) - f(x^-, y^-).
\end{equation}
\end{definition}

In Section~\ref{sec:PDSV-structure}, we prove that the PDS value function is submodular in $(\pds{b}, \pds{e})$ (Proposition~\ref{prop:PDSV-submodular-b-e}).

\subsection{Properties of the Cost and Transition Probability Functions}\label{sec:properties-cost-tpf}

We now present key properties of the cost and transition probability functions that we will need for our main results.
Recall that the cost does not directly depend on the battery state $e$, but that the action $a$ is constrained to be in the set $\mathcal{A}(b,e)$ (i.e., $a$ is constrained to be 0 if $e < e_{TX}$ or $b = 0$). To show this explicitly, we define an auxiliary cost function
\begin{equation}\label{eq:auxiliary-cost}
  d([b,e,h],a) = 
  \begin{cases}
    c([b,h],a), & \mbox{if $b > 0$ and $e \geq e_{TX}$} \\
    c([b,h],0), & \mbox{otherwise,}
  \end{cases}
\end{equation}
where $c([b,h],a)$ is defined in~\eqref{eq:cost}. We omit the proofs of the following three lemmas due to space limitations.

\begin{lemma}\label{lem:cost}
The auxiliary cost $d([b, e, h], a)$ satisfies the following properties:
\begin{enumerate}
\item \label{item:incr-cost-b} The auxiliary cost is non-decreasing in $b$.
\item \label{item:decr-cost-e} The auxiliary cost is non-increasing in $e$.
\end{enumerate}
\end{lemma}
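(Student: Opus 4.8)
The plan is to work directly from the definition of the auxiliary cost in~\eqref{eq:auxiliary-cost} and the base cost $c([b,h],a)$ in~\eqref{eq:cost}, reducing both claims to monotonicity statements about the holding cost $b$ and the expected overflow cost $\mathbb{E}_{f,l}[\eta\max(b-f+l-N_b,0)]$, and handling the case split in~\eqref{eq:auxiliary-cost} carefully at the boundary.

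\textbf{Part (\ref{item:incr-cost-b}): non-decreasing in $b$.} First I would observe that, for \emph{fixed} action $a$, the holding term $b$ is trivially increasing in $b$, and the overflow term is increasing in $b$ because $\max(b-f+l-N_b,0)$ is a non-decreasing function of $b$ for every realization of $(f,l)$, so its expectation over the (action-dependent but $b$-independent) goodput and arrival distributions is non-decreasing in $b$; hence $c([b,h],a)$ is non-decreasing in $b$ for each fixed $a$. The only subtlety is that $d$ uses action $a$ when $b>0,e\ge e_{TX}$ but action $0$ otherwise, so when comparing $b^-<b^+$ we must check the one transition that matters: $b^-=0$ (forcing action $0$) versus $b^+>0$ (allowing action $a$). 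In that case $d([0,e,h],a)=c([0,h],0)=0+\mathbb{E}_l[\eta\max(l-N_b,0)]$, which is the smallest possible value of $c$ over all buffer states and actions; since $c([b^+,h],a)\ge c([b^+,h],0)\ge c([0,h],0)$ (the first inequality because transmitting can only reduce the backlog seen by the overflow term, the second by monotonicity in $b$ at fixed action $0$), we get $d([b^+,e,h],a)\ge d([0,e,h],0)$. For $0<b^-<b^+$ both entries use the same $a$ and monotonicity at fixed $a$ finishes it; for $b^-<b^+=0$ the comparison is vacuous. Similarly the $e<e_{TX}$ branch uses action $0$ on both sides and reduces to the fixed-action case.

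\textbf{Part (\ref{item:decr-cost-e}): non-increasing in $e$.} Here the cost $c$ does not depend on $e$ at all, so the only way $e$ enters $d$ is through the feasible action set. For fixed $(b,h)$ with $b>0$, increasing $e$ can only switch the relevant branch of~\eqref{eq:auxiliary-cost} from ``action forced to $0$'' (when $e<e_{TX}$) to ``action $a$ allowed'' (when $e\ge e_{TX}$); and since the scheduler at the call site will only pick $a=1$ if it does not increase cost, the relevant quantity $\min_a c([b,h],a)$ is non-increasing in $e$. More precisely, for $e^-<e^+$: if both are $\ge e_{TX}$ or both $<e_{TX}$, $d$ is constant in $e$; in the remaining case $e^-<e_{TX}\le e^+$, we have $d([b,e^-,h],a)=c([b,h],0)\ge \min\{c([b,h],0),c([b,h],1)\}$, and the right-hand side is exactly what $d([b,e^+,h],\cdot)$ can achieve — the key inequality being $c([b,h],1)\le c([b,h],0)$, which holds because $\max(b-1+l-N_b,0)\le\max(b+l-N_b,0)$ pointwise in $l$ so the overflow term only decreases when we transmit. (For the statement as phrased, ``non-increasing in $e$'' is understood with the understanding that $d$ is evaluated at the cost-minimizing feasible action, consistent with its use inside the Bellman operator.)

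\textbf{Main obstacle.} The routine monotonicity in $b$ and $e$ at a fixed action is immediate; the real care is in the boundary transitions of the piecewise definition~\eqref{eq:auxiliary-cost} — specifically showing $c([b,h],1)\le c([b,h],0)$ and $c([0,h],0)$ is a global minimum — so that enlarging the feasible action set (which happens precisely as $b$ leaves $0$ or $e$ crosses $e_{TX}$) never makes the effective cost jump the wrong way. I expect that bookkeeping across the two-way case split, rather than any hard inequality, to be where a full write-up spends its effort.
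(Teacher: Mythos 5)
The paper omits its own proof of this lemma, so there is nothing to compare against; judging your argument on its own terms, the overall strategy (fixed-action monotonicity of $c$ plus careful handling of the branch boundaries of~\eqref{eq:auxiliary-cost}) is the right one, and Part~(\ref{item:decr-cost-e}) is essentially correct: the only nontrivial case is $e^- < e_{TX} \leq e^+$ with $a=1$, where $d([b,e^-,h],1)=c([b,h],0) \geq c([b,h],1) = d([b,e^+,h],1)$ follows from your key inequality $c([b,h],1)\leq c([b,h],0)$, which is valid since $\max(b-1+l-N_b,0)\leq\max(b+l-N_b,0)$ pointwise. (The detour through ``the scheduler picks the cost-minimizing action'' is unnecessary for this; the statement holds for each fixed $a$.)

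There is, however, one step in Part~(\ref{item:incr-cost-b}) that fails as written. In the boundary case $b^-=0<b^+$ you invoke the chain $c([b^+,h],a)\geq c([b^+,h],0)\geq c([0,h],0)$ and justify the first link by ``transmitting can only reduce the backlog seen by the overflow term.'' That justification yields the \emph{opposite} inequality, $c([b^+,h],1)\leq c([b^+,h],0)$ --- exactly the inequality you correctly use in Part~(\ref{item:decr-cost-e}) --- so the chain is broken: knowing $c([b^+,h],a)\leq c([b^+,h],0)$ and $c([b^+,h],0)\geq c([0,h],0)$ gives no lower bound on $c([b^+,h],a)$. The conclusion you need, $c([b^+,h],a)\geq c([0,h],0)$, is still true, but it requires a direct argument: since $f\leq a\leq 1$ and $b^+\geq 1$, we have $b^+-f\geq 0$, hence $\max(b^+-f+l-N_b,0)\geq\max(l-N_b,0)$ for every realization of $(f,l)$, and $b^+\geq 1>0$, so $c([b^+,h],a)=b^+ + \eta\,\mathbb{E}_{f,l}[\max(b^+-f+l-N_b,0)]\geq 0+\eta\,\mathbb{E}_l[\max(l-N_b,0)]=c([0,h],0)$. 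Equivalently, one can use the corrected chain $c([b^+,h],a)\geq c([b^+-1,h],0)\geq c([0,h],0)$. With that repair the proof is complete.
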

Since the auxiliary cost function satisfies Lemma~\ref{lem:cost}, the cost function $c([b,h],a)$, with $a \in \mathcal{A}(b,e)$, also satisfies it.

\begin{lemma}\label{lem:stoch-dom-e}
The battery state transition probabilities are first-order stochastically increasing in the battery state $e$, i.e.,
\begin{equation*}
\sum_{e' \geq \bar{e} } P^e(e' | e + 1, a) \geq \sum_{e' \geq \bar{e}} P^e(e' | e, a), \quad 0 \leq e < N_e.
\end{equation*}
\end{lemma}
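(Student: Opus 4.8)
The plan is to exploit that the next battery level $\min(e - a e_{TX} + e_H, N_e)$ is, for every fixed realization of the harvested energy $e_H$ and every fixed action $a$, a non-decreasing function of $e$; a single coupling on the draw of $e_H$ then yields pathwise monotonicity, which is stronger than the required first-order stochastic dominance.

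Concretely, I would first rewrite the tail sum using~\eqref{eq:battery-tpf}: for an arbitrary threshold $\bar e$,
\[
\sum_{e' \geq \bar e} P^e(e' | e, a) = \Pr\bigl(\min(e - a e_{TX} + e_H, N_e) \geq \bar e\bigr),
\]
where the probability is over $e_H \sim P^{e_H}$. I then split according to the saturation level. If $\bar e > N_e$, the event is empty, so both sides of the claimed inequality are $0$ and equality holds trivially. If $\bar e \leq N_e$, the cap at $N_e$ is irrelevant to the event, i.e.\ $\{\min(x, N_e) \geq \bar e\} = \{x \geq \bar e\}$, so the tail sum equals $\Pr(e_H \geq \bar e - e + a e_{TX})$.

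The crux is then a one-line monotonicity observation: $t \mapsto \Pr(e_H \geq t)$ is non-increasing (it is a complementary CDF), and replacing $e$ by $e+1$ lowers the threshold by exactly one, so
\[
\sum_{e' \geq \bar e} P^e(e' | e+1, a) = \Pr\bigl(e_H \geq \bar e - e - 1 + a e_{TX}\bigr) \geq \Pr\bigl(e_H \geq \bar e - e + a e_{TX}\bigr) = \sum_{e' \geq \bar e} P^e(e' | e, a),
\]
which is exactly the claim for all $0 \leq e < N_e$. Equivalently, one may phrase this as a coupling: realize $E'(e) \triangleq \min(e - a e_{TX} + e_H, N_e)$ for both $e$ and $e+1$ using a common draw of $e_H$; since $x \mapsto x$ and $x \mapsto \min(x, N_e)$ are non-decreasing, $E'(e+1) \geq E'(e)$ for every realization, hence $\{E'(e) \geq \bar e\} \subseteq \{E'(e+1) \geq \bar e\}$ and the probabilities are ordered accordingly.

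I do not anticipate a genuine obstacle; the result reduces to this monotonicity observation once the tail probability is written in the form $\Pr(e_H \geq \text{threshold})$. The only point requiring mild care is the boundary behavior of $\min(\cdot, N_e)$ — keeping the cases $\bar e \leq N_e$ and $\bar e > N_e$ separate — and noting that nothing in the argument uses nonnegativity of $e - a e_{TX}$, so the statement holds for any fixed $a \in \{0,1\}$ for which $P^e(\cdot\,|e,a)$ is defined, in particular for the feasible actions in $\mathcal{A}(b,e)$. The same coupling idea applied to $f$ and $l$ in~\eqref{eq:buffer-tpf} gives the analogous stochastic-monotonicity statement in the buffer state $b$.
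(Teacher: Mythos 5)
Your argument is correct: writing the tail sum as $\Pr\bigl(\min(e - a\,e_{TX} + e_H, N_e) \geq \bar e\bigr)$, disposing of the $\bar e > N_e$ case, reducing to the complementary CDF $\Pr(e_H \geq \bar e - e + a\,e_{TX})$, and invoking its monotonicity (equivalently, the monotone coupling in $e_H$) is exactly the standard route to this stochastic-dominance claim, and your care about feasibility of $a$ at both $e$ and $e+1$ is appropriate. The paper omits its proof of this lemma for space reasons, so there is nothing to compare against, but your proof supplies precisely the argument the authors evidently had in mind (and, as you note, the same coupling handles Lemma~\ref{lem:stoch-dom-b}).
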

\begin{lemma}\label{lem:stoch-dom-b}
The buffer state transition probabilities are first-order stochastically increasing in the buffer state $b$, i.e.,
\begin{equation*}
\sum_{b' \geq \bar{b} } P^b(b' | [b + 1, h], a) \geq \sum_{b' \geq \bar{b}} P^b(b' | [b, h], a), \quad 0 \leq b < N_b.
\end{equation*}
\end{lemma}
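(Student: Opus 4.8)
The plan is to establish Lemma~\ref{lem:stoch-dom-b} by a monotone coupling argument that exploits the structure of the buffer recursion~\eqref{eq:b-recursion}: once the random goodput $f$ and the random arrivals $l$ are fixed, the next buffer occupancy is a \emph{deterministic and non-decreasing} function of the current occupancy $b$, while the joint law of $(f,l)$ does not depend on $b$.

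Concretely, I would fix a channel state $h$ and an action $a$ that is feasible in state $b$ (hence also in state $b+1$, since $\mathcal{A}(b,e)\subseteq\mathcal{A}(b+1,e)$) and define $\phi(b,f,l)\triangleq\min(b-f+l,N_b)$, so that by~\eqref{eq:buffer-tpf} we have $P^b(b'|[b,h],a)=\mathbb{E}_{f,l}[\mathbb{I}_{\{\phi(b,f,l)=b'\}}]$. By the goodput pmf~\eqref{eq:goodput-pmf} the distribution of $f$ depends only on $(a,h)$, and the arrival distribution $P^l$ is state-independent; therefore the pair $(f,l)$ can be drawn from a \emph{single} distribution common to the states $b$ and $b+1$.

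Next I would observe that for every fixed realization $(f,l)$ the map $b\mapsto\phi(b,f,l)$ is non-decreasing, because $b\mapsto b-f+l$ is strictly increasing and $\min(\cdot,N_b)$ is non-decreasing; in particular $\phi(b+1,f,l)\geq\phi(b,f,l)$. Hence, for every threshold $\bar b$ and every realization $(f,l)$, the pointwise inequality $\mathbb{I}_{\{\phi(b+1,f,l)\geq\bar b\}}\geq\mathbb{I}_{\{\phi(b,f,l)\geq\bar b\}}$ holds. Taking expectations over the common law of $(f,l)$ gives
\[
\sum_{b'\geq\bar b}P^b(b'|[b+1,h],a)\;\geq\;\sum_{b'\geq\bar b}P^b(b'|[b,h],a),
\]
which is exactly the asserted first-order stochastic dominance.

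I do not expect a substantive obstacle here: the only delicate points are (i) verifying that truncation at $N_b$ does not destroy monotonicity -- it does not, since $\min(\cdot,N_b)$ is itself monotone -- and (ii) the bookkeeping of action feasibility at the boundary $b=0$, which is handled by comparing only actions feasible at the smaller state. For completeness I would note that Lemma~\ref{lem:stoch-dom-e} follows from the same template applied to~\eqref{eq:e-recursion}, coupling on the common harvest law $P^{e_H}$ and using monotonicity of $e\mapsto\min(e-a\cdot e_{\text{TX}}+e_H,N_e)$ in $e$.
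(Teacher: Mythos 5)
The paper omits the proof of Lemma~\ref{lem:stoch-dom-b} (and of Lemmas~\ref{lem:cost} and~\ref{lem:stoch-dom-e}) due to space limitations, so there is no in-paper argument to compare against. Your coupling argument is correct and is the standard way to prove this kind of statement: the law of $(f,l)$ given $(a,h)$ is independent of $b$ by~\eqref{eq:goodput-pmf} and the i.i.d.\ arrival assumption, the map $b\mapsto\min(b-f+l,N_b)$ is non-decreasing for each fixed realization, so the indicator of the event $\{\min(b-f+l,N_b)\geq\bar b\}$ is pointwise dominated by its counterpart at $b+1$, and taking expectations under the common law yields the claimed first-order stochastic dominance. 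Your handling of the boundary (comparing only actions feasible at the smaller buffer state, where $\mathcal{A}(b,e)\subseteq\mathcal{A}(b+1,e)$) is the right bookkeeping, and the observation that the identical template with the harvest law $P^{e_H}$ proves Lemma~\ref{lem:stoch-dom-e} is also correct.
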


Lemma~\ref{lem:stoch-dom-e} (Lemma~\ref{lem:stoch-dom-b}) implies that the next battery (buffer) state has a higher probability of exceeding a threshold if the current battery (buffer) state is larger.

\subsection{Properties of the Conventional State Value Function}\label{sec:V-structure}\label{sec:V-structure}

\begin{lemma}\label{lem:V-non-decr-b}
The optimal value function $V^*(b, e, h)$ is non-decreasing in the buffer state $b$.
\end{lemma}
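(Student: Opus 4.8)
The plan is to prove the lemma by induction over the value iteration of Algorithm~\ref{alg:pds-value-iter}. Its iterates satisfy $V_\tau \to V^*$ and $\pds{V}_\tau \to \pds{V}^*$ pointwise (standard, since $\gamma < 1$), and the class of functions that are non-decreasing in $b$ is closed under pointwise limits, so it suffices to show that $V_\tau(\cdot, e, h)$ is non-decreasing in $b$ for every $\tau$. Since the two updates in Algorithm~\ref{alg:pds-value-iter} alternate between $V_\tau$ and $\pds{V}_\tau$, the natural object of the induction is the pair $(V_\tau, \pds{V}_\tau)$; moreover, because of the buffer-empty boundary described below, I would carry the induction jointly with the companion battery-monotonicity. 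Concretely, I would prove by induction on $\tau$ that both $V_\tau$ and $\pds{V}_\tau$ are non-decreasing in the buffer argument and non-increasing in the battery argument. The base case $\pds{V}_0 \equiv 0$ is immediate.

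Step A (from $\pds{V}_\tau$ to $V_\tau$) handles the $V_\tau$-update in Algorithm~\ref{alg:pds-value-iter}. Fix $e, h$, put $b^+ = b^- + 1$, and let $a^+$ attain the minimum at $(b^+, e, h)$. If $a^+ = 0$, then using the always-feasible action $a = 0$ at $b^-$ together with monotonicity of $\pds{V}_\tau$ in its first argument, $V_\tau(b^-, e, h) \le b^- + \pds{V}_\tau(b^-, e, h) \le b^+ + \pds{V}_\tau(b^+, e, h) = V_\tau(b^+, e, h)$. If $a^+ = 1$ and $b^- \ge 1$, then $a = 1$ is feasible at $b^-$, and the quantity $b + q(h)\pds{V}_\tau(b, e - e_{TX}, h) + (1-q(h))\pds{V}_\tau(b - 1, e - e_{TX}, h)$ is termwise non-decreasing in $b$ (again by first-argument monotonicity of $\pds{V}_\tau$, and since $b - 1 \ge 0$ here), so evaluating it at $b^-$ and at $b^+$ gives $V_\tau(b^-, e, h) \le V_\tau(b^+, e, h)$. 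The remaining case is the buffer-empty boundary $b^- = 0$, $b^+ = 1$, $e \ge e_{TX}$: only $a = 0$ is feasible at $b^-$, so $V_\tau(0, e, h) = \pds{V}_\tau(0, e, h)$, and I would show $\pds{V}_\tau(0, e, h) \le 1 + \mathbb{E}_f[\pds{V}_\tau(1 - f, e - a\,e_{TX}, h)]$ for both $a \in \{0, 1\}$ --- for $a = 0$ directly from first-argument monotonicity plus the additive $1$, and for $a = 1$ by lower-bounding the $f$-expectation (via first-argument monotonicity) by $1 + \pds{V}_\tau(0, e - e_{TX}, h)$ and then applying the battery monotonicity $\pds{V}_\tau(0, e - e_{TX}, h) \ge \pds{V}_\tau(0, e, h)$. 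A parallel and easier case check, using that raising $e$ can only enlarge $\mathcal{A}(b, e)$ and weakly increases each battery argument $e - a\,e_{TX}$, shows that $V_\tau$ is non-increasing in $e$.

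Step B (from $V_\tau$ to $\pds{V}_{\tau+1}$) handles the post-decision update in Algorithm~\ref{alg:pds-value-iter} and is routine: $\max(\pds{b} + l - N_b, 0)$ and $\min(\pds{b} + l, N_b)$ are non-decreasing in $\pds{b}$, $\min(\pds{e} + e_H, N_e)$ is non-decreasing in $\pds{e}$, and composing the latter two with $V_\tau$ (non-decreasing in its first argument and non-increasing in its second, by Step A) and then averaging with the non-negative weights $\eta, \gamma$ preserves both monotonicities. Hence $\pds{V}_{\tau+1}$ is non-decreasing in $\pds{b}$ and non-increasing in $\pds{e}$, and letting $\tau \to \infty$ yields the lemma.

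The step I expect to be the main obstacle is the buffer-empty boundary in Step A: when the queue is empty the transmit action is infeasible, so one cannot simply reuse the minimizing action of the larger-backlog state, and closing that case genuinely requires the battery-monotonicity of $\pds{V}_\tau$ --- which is precisely why the induction must be run simultaneously for the buffer and battery directions (equivalently, why Lemma~\ref{lem:V-non-decr-b} is most naturally proved together with its battery-monotonicity counterpart). Everything else reduces to compositional and termwise monotonicity checks.
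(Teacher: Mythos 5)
Your proof is correct, but it takes a genuinely different route from the paper's. The paper runs the induction on the conventional value iteration $V_t \mapsto V_{t+1}$ rather than on the alternating PDS updates of Algorithm~\ref{alg:pds-value-iter}: it shows that $Q_{t+1}([b,e,h],a)$ is non-decreasing in $b$ for each fixed $a$ by combining the monotone cost (Lemma~\ref{lem:cost}) with the stochastic-dominance property of the buffer kernel (Lemma~\ref{lem:stoch-dom-b}) applied to the induction hypothesis on $V_t$, and then closes the argument by reusing the optimal action $a^*$ of the larger-backlog state: $V_{t+1}([b+1,e,h]) = Q_{t+1}([b+1,e,h],a^*) \ge Q_{t+1}([b,e,h],a^*) \ge \min_a Q_{t+1}([b,e,h],a) = V_{t+1}([b,e,h])$. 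Your Step B is the explicit coupling version of that stochastic-dominance step (pointwise monotonicity of $\min(\pds{b}+l,N_b)$ in $\pds{b}$ for each realization of $l$ is exactly what Lemma~\ref{lem:stoch-dom-b} packages), so that part is equivalent in substance; what each approach buys is that the paper isolates the probabilistic content into reusable dominance lemmas and gets a shorter argument, while yours is self-contained and term-by-term. The genuinely distinct contribution of your version is the buffer-empty boundary: you correctly observe that when $b^-=0$ the minimizing action of the larger state may be infeasible, so the reuse-$a^*$ trick does not directly apply, and you close that case by importing battery-monotonicity of $\pds{V}_\tau$ into a joint induction. The paper's displayed chain glosses over this --- it silently assumes $a^*\in\mathcal{A}(b,e)$, and making it rigorous at $b=0$ requires the auxiliary-cost convention~\eqref{eq:auxiliary-cost} together with essentially the battery-monotonicity you invoke (stated separately there as Lemma~\ref{lem:V-non-incr-e}). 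So your proof is longer but more careful exactly where the paper's is thinnest.
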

\begin{proof}
The proof is given in the appendix.
\end{proof}

\begin{lemma}\label{lem:V-non-incr-e}
The optimal value function $V^*(b, e, h)$ is non-increasing in the battery state $e$.
\end{lemma}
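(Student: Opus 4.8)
The plan is to mirror the (appendix) proof of Lemma~\ref{lem:V-non-decr-b} and argue by induction on the value-iteration index $\tau$ in Algorithm~\ref{alg:pds-value-iter}, carrying along the \emph{joint} hypothesis that both $V_\tau(b,e,h)$ and $\pds{V}_\tau(\pds{b},\pds{e},\pds{h})$ are non-increasing in their (PDS) battery argument. The base case is immediate since $\pds{V}_0 \equiv 0$ is trivially non-increasing in $\pds{e}$.

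For the inductive step I would first propagate the hypothesis from $\pds{V}_\tau$ to $V_\tau$ via the update \eqref{eq:update-v}. Fix $b$, $h$ and take $e^+ \ge e^-$. Two facts drive the argument: (a) the feasible action set is monotone, $\mathcal{A}(b,e^-) \subseteq \mathcal{A}(b,e^+)$, because the only state-dependent constraint, $a\,e_{TX} \le e$, is relaxed as $e$ grows; and (b) for any action $a$ feasible at $e^-$, the post-decision battery level $e - a\,e_{TX}$ is non-negative at both $e^-$ and $e^+$ and obeys $e^+ - a\,e_{TX} \ge e^- - a\,e_{TX}$, so the induction hypothesis on $\pds{V}_\tau$ gives $\pds{V}_\tau(b-f,\,e^+ - a\,e_{TX},\,h) \le \pds{V}_\tau(b-f,\,e^- - a\,e_{TX},\,h)$ for every goodput realization $f$. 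Choosing $a$ to be the minimizer in \eqref{eq:update-v} at $(b,e^-,h)$ and noting it remains feasible at $(b,e^+,h)$ then yields $V_\tau(b,e^+,h) \le V_\tau(b,e^-,h)$.

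Next I would propagate from $V_\tau$ back to $\pds{V}_{\tau+1}$ using the PDS update in Algorithm~\ref{alg:pds-value-iter} (equivalently \eqref{eq:V_to_PDSV}). The overflow term $\eta\,\mathbb{E}_l[\max(\pds{b}+l-N_b,0)]$ is independent of $\pds{e}$, so only the discounted term matters. For each fixed $e_H$, the map $\pds{e} \mapsto \min(\pds{e}+e_H,N_e)$ is non-decreasing, hence by the monotonicity of $V_\tau$ just established, $V_\tau(\cdot,\min(\pds{e}^+ + e_H,N_e),\cdot) \le V_\tau(\cdot,\min(\pds{e}^- + e_H,N_e),\cdot)$; taking the expectation over $e_H$ (and over $l$ and $h'$) preserves the inequality. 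Alternatively, this step can be phrased through Lemma~\ref{lem:stoch-dom-e} together with the reverse inequality in \eqref{eq:stochastic_increasing} for non-increasing functions. This closes the induction.

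Finally, since value iteration for a discounted MDP converges, $\pds{V}_\tau \to \pds{V}^{*}$ and $V_\tau \to V^{*}$ pointwise, and ``non-increasing in $e$'' is preserved under pointwise limits, so $V^{*}(b,e,h)$ is non-increasing in $e$. The one place where care is genuinely needed — and the main potential obstacle — is the interaction between action-set monotonicity and the energy consumed by $a=1$: one must verify that re-using the minimizing action from the lower battery state is legitimate at the higher state and never drives the post-decision battery negative, which is precisely what $\mathcal{A}(b,e^-) \subseteq \mathcal{A}(b,e^+)$ and $a\,e_{TX} \le e^-$ guarantee.
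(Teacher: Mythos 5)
Your proof is correct, and it is the same basic strategy the paper intends (the paper omits this proof, saying it mirrors Lemma~\ref{lem:V-non-decr-b}: induction on value iteration, showing the Bellman operator preserves monotonicity, then passing to the limit). The mechanics differ in two respects worth noting. First, the paper's template (its appendix proof of Lemma~\ref{lem:V-non-decr-b}) works with the one-step conventional Bellman operator and gets monotonicity of the expected continuation value from the stochastic-dominance lemma (here Lemma~\ref{lem:stoch-dom-e}, via the reverse inequality in~\eqref{eq:stochastic_increasing} for non-increasing $u$) together with Lemma~\ref{lem:cost}.\ref{item:decr-cost-e}; you instead split the induction through the two-stage PDS update of Algorithm~\ref{alg:pds-value-iter} and argue pathwise, using that $\pds{e}\mapsto\min(\pds{e}+e_H,N_e)$ is monotone for each realization of $e_H$ --- an equivalent but more elementary coupling argument, and you correctly flag the stochastic-dominance phrasing as the alternative. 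Second, your explicit treatment of action-set monotonicity, $\mathcal{A}(b,e^-)\subseteq\mathcal{A}(b,e^+)$, so that the minimizer at the lower battery level remains feasible at the higher one, is actually the cleaner half of the argument: in the battery direction the inclusion goes the right way for reusing the minimizer, whereas the paper's Lemma~\ref{lem:V-non-decr-b} proof reuses the optimal action from the \emph{larger} buffer state at the smaller one and quietly relies on the auxiliary cost construction~\eqref{eq:auxiliary-cost} to handle the $b=0$ boundary. Your version makes that feasibility step airtight; the paper's version is more uniform with its other stochastic-dominance-based lemmas. Both routes are sound.
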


\begin{proof}
We omit the proof as it is similar to Lemma~\ref{lem:V-non-decr-b}. 
\end{proof}

The following lemma is needed for the inductive steps in our main results (propositions~\ref{prop:incr-diff-PDSV-b}, \ref{prop:incr-diff-PDSV-e}, and~\ref{prop:PDSV-submodular-b-e}).

\begin{lemma}\label{lem:PDSV-to-V}
The following properties are propagated from the PDS value function $\pds{V}(\pds{b}, \pds{e}, \pds{h})$ to the conventional value function $V(b, e, h)$ through the Bellman equation given in~\eqref{eq:PDSV_to_V}:
\begin{enumerate}
\item \label{item:incr-diff-b} If $\pds{V}(\pds{b}, \pds{e}, \pds{h})$ has increasing differences $\pds{b}$, then $V(b, e, h)$ has increasing differences in $b$.

\item \label{item:incr-diff-e} If $\pds{V}(\pds{b}, \pds{e}, \pds{h})$ has increasing differences in $\pds{e}$, then $V(b, e, h)$ has increasing differences in $e$.

\item \label{item:submod-b-e} If $\pds{V}(\pds{b}, \pds{e}, \pds{h})$ is submodular in $(\pds{b},\pds{e})$, then $V(b, e, h)$ is submodular in $(b,e)$.

\end{enumerate}
\end{lemma}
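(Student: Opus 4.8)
The plan is to prove the three implications with a common template. In each case I fix the coordinates that are held constant in the property being propagated --- $e,h$ for part~\ref{item:incr-diff-b}, $b,h$ for part~\ref{item:incr-diff-e}, and $h$ for part~\ref{item:submod-b-e} --- write the r.h.s.\ of~\eqref{eq:PDSV_to_V} as $V(b,e,h)=\min_{a\in\mathcal A(b,e)}Q_a(b,e)$ with $Q_a(b,e)=b+\mathbb{E}_f\bigl[\pds{V}(b-f,\,e-a\cdot e_{TX},\,h)\bigr]$, so that, using the goodput distribution~\eqref{eq:goodput-pmf}, $Q_0(b,e)=b+\pds{V}(b,e,h)$ and $Q_1(b,e)=b+q(h)\pds{V}(b,e-e_{TX},h)+(1-q(h))\pds{V}(b-1,e-e_{TX},h)$. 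I then show that each $Q_a$ already possesses the structural property in question, and finally argue that the minimization over $a$ preserves it.

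For part~\ref{item:incr-diff-b}, fix $e$ and $h$. Because the goodput variable $f$ takes values in $\{0,1\}$ with a distribution that does not depend on $b$, each of $Q_0(\cdot,e)$ and $Q_1(\cdot,e)$ is the sum of the linear term $b$ and a fixed convex combination of unit shifts of $\pds{b}\mapsto\pds{V}(\pds{b},\cdot,h)$; since the latter has increasing differences by hypothesis, so do $Q_0(\cdot,e)$ and $Q_1(\cdot,e)$. It remains to show that $b\mapsto\min_{a\in\mathcal A(b,e)}Q_a(b,e)$ has increasing differences. For this I would: (i) determine the sign of $b\mapsto Q_1(b,e)-Q_0(b,e)$ from the increasing-differences, monotonicity and submodularity properties of $\pds{V}$, which makes the optimal action $a^{*}(b)$ monotone in $b$; and (ii) verify the equivalent three-point inequality $2V(b+1,e,h)\le V(b,e,h)+V(b+2,e,h)$ by a case split on the pair $(a^{*}(b),a^{*}(b+2))$ --- when these agree the inequality reduces to increasing differences of the single $Q_{a^*}$, and when they differ one uses $2V(b+1,e,h)\le Q_0(b+1,e)+Q_1(b+1,e)$ together with the monotone-action property to reduce the claim to an inequality between the $b$-increments of $Q_0$ and $Q_1$, which one then establishes from submodularity and increasing differences of $\pds{V}$. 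The endpoint $b=0$, where $\mathcal A(0,e)=\{0\}$ and $V(0,e,h)=\pds{V}(0,e,h)$, is checked separately; there the restriction of the feasible set can only raise $V$, which works in the right direction.

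Parts~\ref{item:incr-diff-e} and~\ref{item:submod-b-e} follow the same route. For part~\ref{item:incr-diff-e} I fix $b$ and $h$ and observe that the battery state now enters $Q_a$ both through the feasibility constraint ($a=1$ requires $e\ge e_{TX}$) and through the shift $e\mapsto e-a\cdot e_{TX}$; each $Q_a(b,\cdot)$ has increasing differences in $e$ because $e\mapsto\pds{V}(\cdot,e,h)$ does and integer shifts preserve this, and the minimization is closed exactly as in part~\ref{item:incr-diff-b} using monotonicity of $a^{*}(e)$ and the feasibility threshold $e=e_{TX}$. For part~\ref{item:submod-b-e} I vary $b$ and $e$ jointly: each $Q_a$ is submodular in $(b,e)$ since $\pds{V}$ is and the action only translates the battery argument by a constant, and I would show that, after the monotone-action split, the two-dimensional decreasing-differences inequality~\eqref{eq:submodular} for $V$ reduces to the same inequality for the relevant $Q_a$'s.

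The main obstacle is the minimization step, which is shared by all three parts: a pointwise minimum of integer-convex functions need \emph{not} be integer convex, so the argument genuinely depends on the monotonicity of the optimal policy --- and hence on invoking the remaining structural properties of $\pds{V}$ established alongside in the simultaneous induction --- together with a careful treatment of the crossover index of $Q_0$ and $Q_1$ and of the states where $\mathcal A(b,e)$ changes. I expect the case in which $a^{*}$ takes different values at the two extreme states (e.g.\ between $b$ and $b+2$, or between $e$ and $e+2$) to require the most delicate bookkeeping, and this is the step on which the proof hinges.
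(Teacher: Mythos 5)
Your decomposition of \eqref{eq:PDSV_to_V} into $Q_0(b,e)=b+\pds{V}(b,e,h)$ and $Q_1(b,e)=b+q(h)\,\pds{V}(b,e-e_{TX},h)+(1-q(h))\,\pds{V}(b-1,e-e_{TX},h)$ is exactly the paper's starting point, and your observation that each $Q_a$ individually inherits increasing differences (resp.\ submodularity) from $\pds{V}$ --- being a linear term plus a fixed convex combination of integer shifts --- is the substantive content of the paper's argument. Where the two diverge is at the minimization. The paper simply substitutes the optimal action $a^*$, writes $V(b,e,h)=b+(1-a^*)\pds{V}(b,e,h)+a^*q(h)\pds{V}(b,e-e_{TX},h)+a^*(1-q(h))\pds{V}(b-1,e-e_{TX},h)$, and concludes by invoking the closure of increasing (decreasing) differences under non-negative weighted sums; it never addresses the fact that $a^*$ --- and hence the weights --- varies with $(b,e)$, nor the boundary states where $\mathcal{A}(b,e)$ collapses to $\{0\}$. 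You have correctly identified this as the genuine obstacle (a pointwise minimum of integer-convex functions need not be integer convex), so your diagnosis is in fact sharper than the paper's own write-up.

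The gap is that you do not close that step. Items (i) and (ii) of your plan --- establishing monotonicity of $a^*$ in $b$ (resp.\ $e$), and discharging the mixed case $a^*(b)\neq a^*(b+2)$ via an inequality between increments of $Q_0$ and $Q_1$ --- are stated as intentions (``I would determine\dots'', ``which one then establishes\dots'') rather than carried out, and the mixed case is precisely where the work lies: one needs, e.g., $Q_{a_1}(b+1,e)-Q_{a_1}(b,e)\geq Q_{a_2}(b,e)-Q_{a_2}(b-1,e)$ for $a_1=a^*(b+1)\neq a_2=a^*(b-1)$, which does not follow from increasing differences of each $Q_a$ alone and requires the cross-inequalities you allude to (submodularity of $\pds{V}$ in $(\pds{b},\pds{e})$ together with the monotonicity and increasing-differences properties, all carried simultaneously in the induction). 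Until those inequalities are written down and verified --- including at the feasibility boundaries $b=0$ and $e<e_{TX}$ --- your proposal is a proof plan rather than a proof. You could instead adopt the paper's shortcut to match its length, but be aware that, as written, the paper's one-line conclusion is itself incomplete for exactly the reason you identified.
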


\begin{proof}
The proof is given in the appendix.
\end{proof}

Lemma~\ref{lem:PDSV-to-V} implies that the   PDS value function's properties are propagated to the conventional value function during the value function update step in Algorithm~\ref{alg:pds-value-iter} (see~\eqref{eq:update-v}). 

\subsection{Properties of the Post-Decision State Value Function}\label{sec:PDSV-structure}

We now prove that the optimal PDS value function has increasing differences in the buffer's PDS $\pds{b}$ and the battery's PDS $\pds{e}$, and decreasing differences in $(\pds{b}, \pds{e})$ (i.e., it is submodular in $(\pds{b}, \pds{e})$). We then discuss the meaning of these results.

\begin{proposition}\label{prop:incr-diff-PDSV-b}
If the packet buffer has infinite size ($N_b = \infty$), then $\pds{V}^*(\pds{b}, \pds{e}, \pds{h})$ has increasing differences in $\pds{b}$, i.e.,
\begin{multline}\label{eq:incr-diff-PDSV-b}
\pds{V}^*(\pds{b}, \pds{e}, \pds{h}) - \pds{V}^*(\pds{b} - 1, \pds{e}, \pds{h}) \\
\leq \pds{V}^*(\pds{b} + 1, \pds{e}, \pds{h}) - \pds{V}^*(\pds{b}, \pds{e}, \pds{h}).
\end{multline}
\end{proposition}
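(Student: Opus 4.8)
The plan is to establish~\eqref{eq:incr-diff-PDSV-b} by induction on the iterates $\pds{V}_\tau$ generated by Algorithm~\ref{alg:pds-value-iter}, and then pass to the limit $\pds{V}_\tau \to \pds{V}^*$. The crucial consequence of the hypothesis $N_b = \infty$ is that it linearizes the buffer recursion: the truncation $\min(\pds{b}+l,N_b)$ becomes simply $\pds{b}+l$, and the overflow term $\eta\,\mathbb{E}_l[\max(\pds{b}+l-N_b,0)]$ vanishes, so the PDS update collapses to
\begin{equation*}
\pds{V}_{\tau+1}(\pds{b},\pds{e},\pds{h}) = \gamma\,\mathbb{E}_{l,e_H,h'}\!\bigl[V_\tau(\pds{b}+l,\min(\pds{e}+e_H,N_e),h')\bigr].
\end{equation*}

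For the base case, $\pds{V}_0 \equiv 0$ trivially has increasing differences in $\pds{b}$. For the inductive step I would assume $\pds{V}_\tau$ has increasing differences in $\pds{b}$; then Lemma~\ref{lem:PDSV-to-V}(\ref{item:incr-diff-b}) gives that $V_\tau(b,e,h)$ has increasing differences in $b$. Fixing $\pds{e},\pds{h}$ and conditioning on a realization of $(l,e_H,h')$, the first difference $V_\tau(\pds{b}+1+l,\cdot,\cdot) - V_\tau(\pds{b}+l,\cdot,\cdot)$ (with the second and third coordinates held at $\min(\pds{e}+e_H,N_e)$ and $h'$) is non-decreasing in $\pds{b}$, since $V_\tau$ has increasing differences in its first coordinate and $l\ge 0$ is a state-independent shift. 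Because $(l,e_H,h')$ is drawn independently of $\pds{b}$, the operation $\mathbb{E}_{l,e_H,h'}[\cdot]$ is a fixed convex combination that preserves the property ``non-decreasing in $\pds{b}$'', and multiplying by $\gamma>0$ does too; hence $\pds{V}_{\tau+1}(\pds{b}+1,\pds{e},\pds{h}) - \pds{V}_{\tau+1}(\pds{b},\pds{e},\pds{h})$ is non-decreasing in $\pds{b}$, i.e.,~\eqref{eq:incr-diff-PDSV-b} holds for $\pds{V}_{\tau+1}$. Finally, value iteration converges, $\pds{V}_\tau\to\pds{V}^*$ pointwise, and~\eqref{eq:incr-diff-PDSV-b} is a closed inequality for each fixed $(\pds{b},\pds{e},\pds{h})$, so the property transfers to $\pds{V}^*$.

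The main obstacle is understanding exactly why $N_b = \infty$ is required: for a finite buffer, $V_\tau(\min(\pds{b}+l,N_b),\cdot,\cdot)$ is the composition of the integer-convex $V_\tau$ with the concave, non-decreasing truncation $x\mapsto\min(x,N_b)$, and such a composition need not be integer-convex (its first difference can collapse once the buffer saturates), which breaks the inductive step. Everything else should be routine: the propagation from $\pds{V}_\tau$ to $V_\tau$ is handled by the already-stated Lemma~\ref{lem:PDSV-to-V}, and the remaining manipulations amount to the standard fact that an expectation of a family of functions uniformly non-decreasing in a parameter is itself non-decreasing in that parameter, together with the convergence of value iteration. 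Note that monotonicity of $V_\tau$ in $b$ (Lemma~\ref{lem:V-non-decr-b}) is \emph{not} needed for this proposition -- only its increasing-differences counterpart is used.
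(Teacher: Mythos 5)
Your proposal is correct and follows essentially the same route as the paper's proof: induction over the value-iteration iterates, using Lemma~\ref{lem:PDSV-to-V}.\ref{item:incr-diff-b} to propagate increasing differences from $\pds{V}_\tau$ to $V_\tau$, observing that $N_b=\infty$ removes the $\min(\cdot,N_b)$ truncation so the shifted differences of $V_\tau$ inherit the property, and passing to the limit. Your added remarks (that the overflow term vanishes outright when $N_b=\infty$, and that the finite-buffer truncation is exactly what breaks the inductive step) are accurate refinements of the paper's argument rather than a different approach.
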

\begin{proof}
The proof is given in the appendix.
\end{proof}

\begin{proposition}\label{prop:incr-diff-PDSV-e}
$\pds{V}^*(\pds{b},\pds{e},\pds{h})$ has increasing differences in $\pds{e}$, i.e.,
\begin{multline}\label{eq:incr-diff-PDSV-e}
\pds{V}^*(\pds{b}, \pds{e}, \pds{h}) - \pds{V}^*(\pds{b}, \pds{e} - 1, \pds{h}) \\
\leq \pds{V}^*(\pds{b}, \pds{e} + 1, \pds{h}) - \pds{V}^*(\pds{b}, \pds{e}, \pds{h}).
\end{multline}
\end{proposition}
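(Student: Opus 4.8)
The plan is to induct over the value-iteration sweeps of Algorithm~\ref{alg:pds-value-iter} and then pass to the pointwise limit $\pds{V}_\tau \to \pds{V}^*$. I would carry the \emph{joint} invariant that, for every $\tau$, the iterate $\pds{V}_\tau(\pds{b},\pds{e},\pds{h})$ is non-increasing in $\pds{e}$ and has increasing differences in $\pds{e}$; by summation it suffices to verify the consecutive form $\pds{V}_\tau(\pds{b},\pds{e}+1,\pds{h}) - \pds{V}_\tau(\pds{b},\pds{e},\pds{h}) \ge \pds{V}_\tau(\pds{b},\pds{e},\pds{h}) - \pds{V}_\tau(\pds{b},\pds{e}-1,\pds{h})$ whenever $\pds{e}-1,\pds{e},\pds{e}+1 \in \mathcal{S}_e$. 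The base case $\pds{V}_0 \equiv 0$ is immediate. For the inductive step, note that in the $\pds{V}_{\tau+1}$ update the overflow term $\eta\,\mathbb{E}_l[\max(\pds{b}+l-N_b,0)]$ does not involve $\pds{e}$, so it contributes nothing to second-order differences in $\pds{e}$; hence the step reduces to controlling the $\pds{e}$-dependence of $\gamma\,\mathbb{E}_{l,e_H,h'}[V_\tau(\min(\pds{b}+l,N_b),\min(\pds{e}+e_H,N_e),h')]$. By the inductive hypothesis, Lemma~\ref{lem:PDSV-to-V} gives that $V_\tau$ has increasing differences in $e$, and (the per-iterate version of) Lemma~\ref{lem:V-non-incr-e} gives that $V_\tau$ is non-increasing in $e$.

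Because expectation is a nonnegative linear operation and $\min(\pds{b}+l,N_b)$ and $h'$ do not depend on $\pds{e}$, it is enough to fix realizations $l=\ell$, $e_H=m$, $h'=h''$ and show that $\psi(\pds{e}) \triangleq V_\tau(\min(\pds{b}+\ell,N_b),\,\min(\pds{e}+m,N_e),\,h'')$ has increasing differences in $\pds{e}$. I would argue by cases on how the cap $\pds{e}\mapsto\min(\pds{e}+m,N_e)$ acts on the three consecutive arguments: (a) if none of $\pds{e}-1,\pds{e},\pds{e}+1$ is capped, the claimed inequality is exactly the increasing-differences property of $V_\tau$ in $e$ at the battery levels $\pds{e}-1+m,\pds{e}+m,\pds{e}+1+m$; (b) if the two largest arguments are both capped to $N_e$ (in particular if all three are), $\psi$ is locally constant and the inequality is trivial; (c) in the remaining boundary case $\pds{e}+m=N_e$, the inequality collapses to $V_\tau(\cdot,N_e,\cdot) - V_\tau(\cdot,N_e-1,\cdot) \le 0$, which is precisely monotonicity of $V_\tau$ in $e$. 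Averaging over $(l,e_H,h')$ and multiplying by $\gamma>0$ preserves both properties, closing the induction, and taking the pointwise limit yields~\eqref{eq:incr-diff-PDSV-e}. Note that, unlike Proposition~\ref{prop:incr-diff-PDSV-b}, no assumption $N_b=\infty$ is needed here, since the buffer argument $\min(\pds{b}+l,N_b)$ is held fixed throughout.

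The main obstacle is case (c): composing a function that is integer-convex and non-increasing in $e$ with the saturating map $\pds{e}\mapsto\min(\pds{e}+e_H,N_e)$ need not preserve convexity in general, and the one place it can break is at the battery cap. It is rescued there precisely by the non-increasing property of $V_\tau$ (established for every iterate in the proof of Lemma~\ref{lem:V-non-incr-e}), which is why that monotonicity must be carried alongside the convexity invariant rather than invoked only for $\pds{V}^*$. The remaining ingredients — the base case, the harmlessness of the $\pds{e}$-free overflow term, linearity of expectation, the $\gamma$ scaling, and the limit — are routine, mirroring the induction-over-value-iteration technique that underlies Lemma~\ref{lem:V-non-decr-b} and Proposition~\ref{prop:incr-diff-PDSV-b}.
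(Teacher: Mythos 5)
Your proof is correct, and it does more than the paper does: the paper omits this proof entirely, stating only that it is ``similar'' to that of Proposition~\ref{prop:incr-diff-PDSV-b}. A literal transcription of that proof would in fact stall here, because Proposition~\ref{prop:incr-diff-PDSV-b} escapes the saturating map $\min(\pds{b}+l,N_b)$ only by assuming $N_b=\infty$, whereas the battery cap $\min(\pds{e}+e_H,N_e)$ cannot be assumed away ($N_e$ is finite throughout the model, and the proposition is stated without any such hypothesis). You correctly identify this as the crux and supply the missing ingredient: carrying monotonicity of $\pds{V}_\tau$ in $\pds{e}$ jointly with integer convexity through the induction, and invoking it exactly at the boundary case $\pds{e}+e_H=N_e$, where the increasing-differences inequality degenerates to $V_\tau(\cdot,N_e,\cdot)-V_\tau(\cdot,N_e-1,\cdot)\le 0$. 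Your three-case analysis of the cap is exhaustive (for integers, ``the two largest arguments capped'' forces all three to be capped unless $\pds{e}+e_H=N_e$ exactly), and the observation that no infinite-buffer assumption is needed because the buffer argument is held fixed is right. One small step should be made explicit to fully close the joint invariant: you need the \emph{per-iterate} monotonicity of $V_\tau$ in $e$, which is not literally covered by Lemma~\ref{lem:PDSV-to-V} (that lemma propagates only second-order properties) nor by Lemma~\ref{lem:V-non-incr-e} as stated for $V^*$. It does follow easily from the monotonicity half of your invariant on $\pds{V}_\tau$ via~\eqref{eq:update-v}, since for each fixed action the objective is non-increasing in $e$ and the feasible set $\mathcal{A}(b,e)$ only grows with $e$; a sentence to that effect would make the argument airtight.
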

\begin{proof}
We omit the proof as it is similar Proposition~\ref{prop:incr-diff-PDSV-b}.
\end{proof}

\begin{proposition}\label{prop:PDSV-submodular-b-e}
$\pds{V}^*(\pds{b}, \pds{e}, \pds{h})$ is submodular in $(\pds{b}, \pds{e})$, i.e., 
\begin{multline}\label{eq:PDSV-submodular-b-e}
\pds{V}^*(\pds{b}+1, \pds{e}+1, \pds{h}) - \pds{V}^*(\pds{b}, \pds{e}+1, \pds{h}) \\
\leq \pds{V}^*(\pds{b} + 1, \pds{e}, \pds{h}) - \pds{V}^*(\pds{b}, \pds{e}, \pds{h}).
\end{multline}
\end{proposition}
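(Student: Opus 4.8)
The plan is to prove Proposition~\ref{prop:PDSV-submodular-b-e} by induction on the iterations of Algorithm~\ref{alg:pds-value-iter}, exploiting the fact that $\pds{V}_\tau \to \pds{V}^*$ pointwise as $\tau \to \infty$ (standard convergence of value iteration) and that the submodularity inequality~\eqref{eq:submodular} is a closed condition, hence preserved under pointwise limits. The base case is immediate: $\pds{V}_0 \equiv 0$ satisfies~\eqref{eq:PDSV-submodular-b-e} with equality. For the inductive step, I would assume $\pds{V}_\tau(\pds{b}, \pds{e}, \pds{h})$ is submodular in $(\pds{b}, \pds{e})$ and show the same for $\pds{V}_{\tau + 1}$.

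The inductive step has two stages, mirroring the two \texttt{for} loops of Algorithm~\ref{alg:pds-value-iter}. First, by Lemma~\ref{lem:PDSV-to-V}, item~\ref{item:submod-b-e}, submodularity of $\pds{V}_\tau$ in $(\pds{b}, \pds{e})$ propagates through the value update~\eqref{eq:update-v} to give submodularity of $V_\tau$ in $(b, e)$. Second, I examine the PDS update. Its first term, $\eta \mathbb{E}_l[\max(\pds{b} + l - N_b, 0)]$, depends on $\pds{b}$ only, so it trivially satisfies~\eqref{eq:submodular} (both sides are equal). For the second term, since a nonnegative mixture (here the expectation over the $(l, e_H, h')$ distributions, which do not depend on $(\pds{b}, \pds{e})$) of submodular functions is submodular, it suffices to show that for each fixed $(l, e_H, h')$ the map
\begin{equation*}
(\pds{b}, \pds{e}) \;\mapsto\; V_\tau\!\bigl(\min(\pds{b} + l, N_b),\, \min(\pds{e} + e_H, N_e),\, h'\bigr)
\end{equation*}
is submodular in $(\pds{b}, \pds{e})$. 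Writing $\beta^\pm = \min(\pds{b}^\pm + l, N_b)$ and $\epsilon^\pm = \min(\pds{e}^\pm + e_H, N_e)$ for $\pds{b}^+ \ge \pds{b}^-$ and $\pds{e}^+ \ge \pds{e}^-$, one has $\beta^+ \ge \beta^-$ and $\epsilon^+ \ge \epsilon^-$, and the desired cross-difference inequality reduces verbatim to submodularity of $V_\tau$ evaluated at $(\beta^+ \ge \beta^-,\, \epsilon^+ \ge \epsilon^-)$, which holds by the first stage. Summing the two terms yields submodularity of $\pds{V}_{\tau+1}$, closing the induction, and the limit $\tau \to \infty$ finishes the proof.

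The step that requires care -- and is worth highlighting -- is the composition argument just described: that applying the clipped translations $\pds{b} \mapsto \min(\pds{b} + l, N_b)$ and $\pds{e} \mapsto \min(\pds{e} + e_H, N_e)$ \emph{separately} to the two coordinates preserves submodularity. This works for finite \emph{or} infinite $N_b$, which is why Proposition~\ref{prop:PDSV-submodular-b-e}, unlike Proposition~\ref{prop:incr-diff-PDSV-b}, needs no assumption on $N_b$: the same clipping destroys increasing differences (a convex function composed with $\min(\cdot, N_b)$ flattens out), but it leaves the cross-difference in~\eqref{eq:submodular} intact because it merely rearranges which pair of ordered arguments of $V_\tau$ is being compared. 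The remaining details -- monotonicity of the clipped translations, and interchanging the expectation with the finite differences -- are routine.
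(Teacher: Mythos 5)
Your proof is correct and follows the paper's own argument in all essentials: induction over the value-iteration recursion, with Lemma~\ref{lem:PDSV-to-V}.\ref{item:submod-b-e} propagating submodularity from $\pds{V}_\tau$ to $V_\tau$, the overflow term handled trivially because it depends on $\pds{b}$ alone, and the expected-future-value term reduced to submodularity of $V_\tau$ evaluated at the clipped arguments. The only difference is in that last verification: where the paper argues by an explicit two-case analysis on whether $\min(\cdot,N_b)$ and $\min(\cdot,N_e)$ saturate (Cases $b''+1\leq N_b$ and $b''\geq N_b$, with a sub-case for $e''+1>N_e$), you invoke the general fact that composing each coordinate separately with a non-decreasing map preserves submodularity --- a slightly cleaner route that also correctly pinpoints why, unlike Proposition~\ref{prop:incr-diff-PDSV-b}, no assumption $N_b=\infty$ is needed here.
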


\begin{proof}
The proof is given in the appendix.
\end{proof}

Together, Proposition~\ref{prop:incr-diff-PDSV-b} and Lemma~\ref{lem:V-non-decr-b} imply that the cost to serve an additional data packet increases with the queue backlog. 
Although we were only able to prove that $\pds{V}^*(\pds{b}, \pds{e}, \pds{h})$  has increasing differences in the buffer state for an infinite size buffer, we have not observed any cases in practice where this property does not hold for finite buffers. 
Together, Proposition~\ref{prop:incr-diff-PDSV-e} and Lemma~\ref{lem:V-non-incr-e} imply that the benefit of an additional energy packet decreases with the available battery energy. 
Finally, Proposition~\ref{prop:PDSV-submodular-b-e} implies that data packets and energy packets are \textit{complementary}.
That is, the cost of serving an additional data packet is smaller when more energy is available, and the benefit of having an additional energy packet is greater when more data packets need to be served.
\section{Numerical Results}
\label{sec:sim}

In Section~\ref{sec:sim-structure}, we illustrate the structural properties of the optimal PDS value function. In Section~\ref{sec:sim-performance}, we compare the optimal scheduling policy against a so-called greedy policy, which always transmits backlogged packets if there is sufficient energy (i.e., $e^n \geq e_{TX}$). The parameters used in our MATLAB-based simulator are given in Table~\ref{tab:simulation-parameters}.

\begin{table*}[h!]
\centering
\caption{Simulation Parameters} 
\label{tab:simulation-parameters}
\begin{tabu}  to  \textwidth { | X[c] | X[c] | X[c] | X[c] | }
\hline
\textbf{Parameter} & \textbf{Value} & \textbf{Parameter} & \textbf{Value} \\
\hline \hline
Packet Buffer Size, $N_b$ & 25 & Transmission Action, $a \in \mathcal{A}$ & $\left\lbrace 0, 1 \right\rbrace$ \\ 
\hline
Energy Buffer Size, $N_e$ & 25 & Transmission Energy, $e_{TX}$ & 1 \\
\hline
Channel States $h \in \mathcal{H}$ & $\left\lbrace 1, 2, ..., 7, 8 \right\rbrace$ & Discount Factor, $\gamma$ & 0.98 \\
\hline
Packet Loss Rate (PLR), $q(h)$ & $\left\lbrace 0.8, 0.7, 0.6, ... , 0.2, 0.1 \right\rbrace$ & Simulation Duration (time slots & 50,000 \\ 
\hline
Packet Arrivals (packets/time slot) & $\left\lbrace 0, 1 \right\rbrace$ & Packet Arrival PMF, $P^l(l)$ & Bernoulli$(p)$ with variable $p$  \\
\hline
Energy Arrivals (packets/time slot) & $\left\lbrace 0, 1 \right\rbrace$ &  Energy Arrival PMF, $P^{e_H}(e_H)$ & Bernoulli$(p)$ with $p = 0.7$  \\
\hline
Overflow Penalty, $\eta$ & 50 & Steady-State Channel Probabilities & (0.071, 0.143, 0.143, 0.143, 0.143, 0.143, 0.143, 0.071) \\
\hline
\end{tabu}
\end{table*}


\subsection{Structural Properties}
\label{sec:sim-structure}
In this section, we assume that the packet and energy arrivals are Bernoulli random variables with parameters $0.4$ and $0.7$, respectively. 
Fig.~\ref{fig:optimal-v} and Fig.~\ref{fig:optimal-policy} show the optimal PDS value function and policy, respectively, under these assumptions. From Fig.~\ref{fig:optimal-v}, it is clear that the optimal PDS value function (i) is non-decreasing and has increasing differences in the queue backlog (Lemma~\ref{lem:V-non-decr-b} and Proposition~\ref{prop:incr-diff-PDSV-b}) and (ii) is non-increasing and has increasing differences in the battery state (Lemma~\ref{lem:V-non-incr-e} and Proposition~\ref{prop:incr-diff-PDSV-e}). 
Fig.~\ref{fig:submodular} shows that $\pds{V}(\pds{b}+1,\pds{e},\pds{h}) - \pds{V}(\pds{b},\pds{e},\pds{h})$ is non-increasing in the battery state \pds{e}, i.e., the optimal PDS value function is submodular in $(\pds{b},\pds{e})$ (Proposition~\ref{prop:PDSV-submodular-b-e}). From Fig.~\ref{fig:optimal-policy}, we observe that the optimal policy is more conservative than the greedy policy because it does not transmit at low battery states.

  

\begin{figure} [h]
\centering
  \subfloat[{Optimal PDS value function}]
  {
  \includegraphics[clip, trim = 1cm 9cm 1cm 9cm, width=0.9\linewidth]{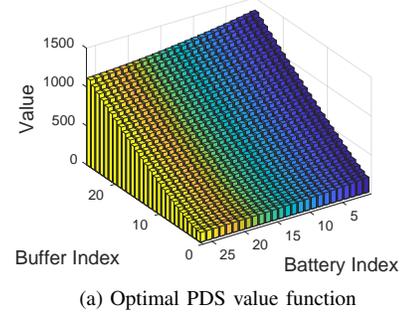}
  \label{fig:optimal-v}
  }
  
  \subfloat[{Optimal policy}]
  {
  \includegraphics[clip, trim = 1cm 9cm 1cm 9cm, width=0.9\linewidth]{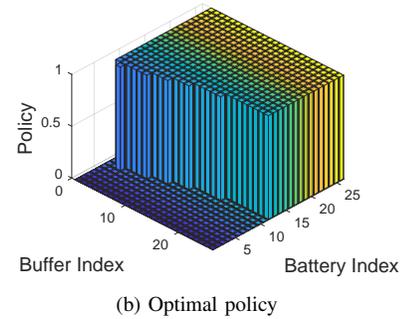}
  \label{fig:optimal-policy}
  }
  \caption{Optimal PDS value function and policy in channel state $h$ with PLR $q(h) = 0.8$.}
  \label{fig:optimal-v-and-policy}
  \vspace{-0.5cm}
\end{figure}

\begin{figure} [h]
\centering
   \includegraphics[clip, trim = 1cm 9cm 1cm 9cm, width=0.9\linewidth]{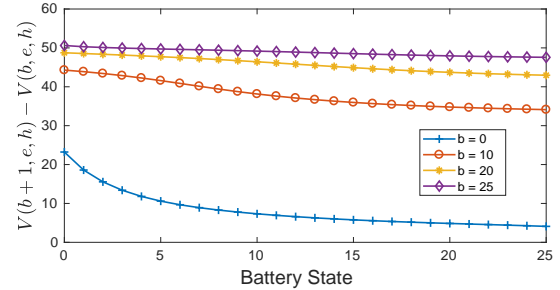}
\   \caption{Submodularity of the PDS value function in $(b, e)$ with PLR $q(h) = 0.8$.}
   \label{fig:submodular}
   \vspace{-0.5cm}
\end{figure}

\subsection{Performance Evaluation}
\label{sec:sim-performance}
We now compare the performance of the optimal and greedy policies assuming that $P^l(l) = \text{Bernoulli}(p)$, where $p \in \{ 0.1, 0.122, 0.144, \ldots, 0.6 \}$,  $P^{e_H}(e_H) = \text{Bernoulli}(0.7)$, and $q(h)=0.8$. Note that the optimal policies were computed offline using Algorithm~\ref{alg:pds-value-iter} and then stored in a lookup table. In Fig.~\ref{fig:avg-buffer-vs-energy-optimal}, we show how the average queue backlog (left axis) and average battery state (right axis) vary with respect to the packet arrival rate. Each measurement is taken from a 50,000 time slot simulation of the corresponding policy. The optimal policy achieves 2.6\% -- 37.4\% lower queue backlogs (19.1\% lower when averaged across all data points) and maintains 0.1\% -- 258.5\% more battery energy (71.1\% higher when averaged across all data points) than the greedy policy.

In Fig.~\ref{fig:avg-outage-vs-avg-overflow}, we show how the buffer overflow (left axis) and battery outage (right axis) probabilities vary with respect to the packet arrival rate. The optimal policy achieves 37.4\% -- 100.0\% lower outage probabilities (75.3\% when averaged across all data points) and achieves 4.7\% to 100.0\% fewer overflows (47.62\% when averaged across all data points) than the greedy policy.



\begin{figure} [h]
\centering
  \subfloat[Average queue backlog and battery state vs. packet arrival rate]
  {
  \includegraphics[clip, trim = 1cm 9cm 0.5cm 9cm, width=0.9\linewidth]{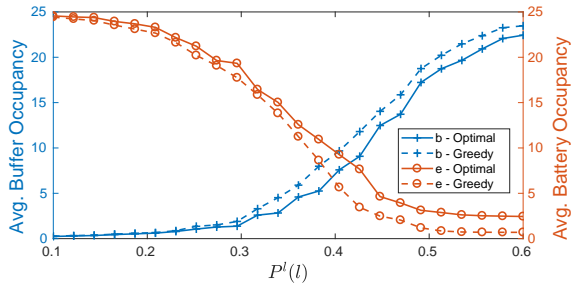}
  \label{fig:avg-buffer-vs-energy-optimal}
  }
  
  \subfloat[Average battery outages and average overflows vs. packet arrival rate]
  {
  \includegraphics[clip, trim = 1cm 9cm 0.5cm 9cm, width=0.9\linewidth]{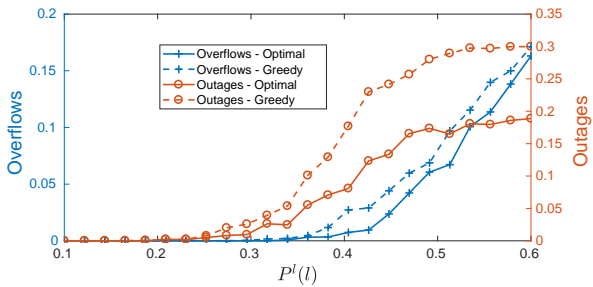}
  \label{fig:avg-outage-vs-avg-overflow}
  }
  
  \caption{Comparison of the optimal and greedy policies.}
  
  \vspace{-0.5cm}
\end{figure}



\section{Conclusion}
\label{sec:con}
We formulated the DSEHS problem as an MDP and analyzed its structural properties. Our analysis does not assume specific data and energy arrival distributions, save for they are i.i.d., and does not require assumptions on the channel transition probabilities, save that they are Markovian. This makes our structural results broadly applicable. We demonstrate that the optimal scheduling policy achieves fewer battery outages, fewer packet overflows, and a better energy-delay trade-off than a greedy policy. As future work, we plan to leverage the structural properties of the DSEHS problem to develop low-complexity reinforcement learning algorithms that can find the optimal scheduling policy with no a priori knowledge of the packet arrival, energy harvesting, and channel dynamics.
\appendix


\subsection{Proof of Lemma~\ref{lem:V-non-decr-b}}
The proof follows by induction. Since value iteration converges for any initialization, select $V_0([b, e, h])$ to be non-decreasing in $b$. 
Assume that $V_t([b, e, h])$ is non-decreasing in $b$.
We prove that $V_{t + 1}([b, e, h])$ is also non-decreasing in $b$.
By definition
\begin{align*}
\lefteqn{V_{t + 1}([b, e, h])} \\
&= \min_{a \in \mathcal{A}(b,e)}  \biggl\{  c([b, h], a) + \gamma \mathbb{E}_{b',e',h'}[ V_t([b', e', h']) ]\biggr\}. \\ 
&= \min_{a \in \mathcal{A}(b,e)} Q_{t+1}([b,e,h],a)
\end{align*}
In Lemma~\ref{lem:cost}.\ref{item:incr-cost-b}, we established that the cost function $c([b, h], a)$ is non-decreasing in $b$. Additionally, since $V_t([b, e, h])$ is non-decreasing in $b$ by the induction hypothesis, and $P^{b}(b^\prime | [b, h], a)$ is stochastically increasing in $b$ by Lemma~\ref{lem:stoch-dom-b}, the expected future value is non-decreasing in $b$. It follows that $Q_{t + 1}([b, e, h], a)$ is also non-decreasing in $b$.

Let $a^*$ be the optimal action in state $(b+1, e, h)$. We have
\begin{align*}
V_{t + 1}([b + 1, e, h]) &= Q_{t + 1}([b + 1, e, h], a^*)\\ 
& \geq Q_{t + 1}([b, e, h], a^*) \\ 
&\geq \min_{a \in \mathcal{A}} Q_{t + 1}([b, e, h], a) \\ 
&= V_{t + 1}([b, e, h])
\end{align*}
where the first inequality follows from the fact that $Q_{t + 1}([b,e,h],a)$ is non-decreasing in $b$ and the second inequality follows from optimality. Thus, the optimal value function $V^*$ is non-decreasing in the buffer state $b$.

\subsection{Proof of Lemma~\ref{lem:PDSV-to-V}}
We may express the value function defined in~\eqref{eq:PDSV_to_V} as
\begin{align*}
V(b,e,h) =&~ \min_{a \in \mathcal{A}(b, e)} \left\{b + \mathbb{E}_f [\pds{V}^{*}(b - f, e - a \cdot e_{TX}, h)]\right\}
\\ =&~ b + (1 - a^*) \pds{V}([b,e,h]) + 
\\ &~ a^* q(h) \pds{V}([b,e - e_{TX},h]) + 
\\ &~ a^* (1 - q(h)) \pds{V}([b - 1, e - e_{TX}, h]),
\end{align*}
where $a^* \in \{0,1\}$ is the optimal action in state $(b, e, h)$ and $q(h) \in [0,1]$ is the packet loss rate. If the PDS value function $\pds{V}(\pds{b}, \pds{e}, \pds{h})$ (i) has increasing differences in $\pds{b}$, (ii) has increasing differences in $\pds{e}$, or (iii) has decreasing differences in $(\pds{b}, \pds{e})$, then the results follow from the fact that a non-negative weighted sum of functions with increasing (decreasing) differences has increasing (decreasing) differences.

\subsection{Proof of Proposition~\ref{prop:incr-diff-PDSV-b}}
Consider the value iteration algorithm, which converges for any initial condition. Initialize the PDS value function $\pds{V}_0(\pds{b}, \pds{e}, \pds{h})$ to satisfy~\eqref{eq:incr-diff-PDSV-b}. Assume that~\eqref{eq:incr-diff-PDSV-b} holds for $\pds{V}_t(\pds{b}, \pds{e}, \pds{h})$, for some $t > 0$. We aim to show that~\eqref{eq:incr-diff-PDSV-b} holds for $\pds{V}_{t + 1}(\pds{b}, \pds{e}, \pds{h})$.
Recall from~\eqref{eq:V_to_PDSV} that the PDS value function can be expressed as a function of the conventional value function. The first term on the r.h.s. of~\eqref{eq:V_to_PDSV} has increasing differences in $\pds{b}$. Thus, we only need to show that the the second term on the r.h.s. of~\eqref{eq:V_to_PDSV} has increasing differences in $\pds{b}$. This is implied if the following condition holds:
\begin{multline}\label{eq:cond-V-b}
{V_t}([\pds{b} + l]^{N_b}, e^\prime, h^\prime ) - {V_t}([\pds{b} - 1 + l]^{N_b}, e^\prime, h^\prime ) \\ 
\leq {V_t}([\pds{b} + 1 + l]^{N_b}), e^\prime, h^\prime ) - {V_t}([\pds{b} + l]^{N_b}, e^\prime, h^\prime ),
\end{multline}
where $[x]^N = \min(x,N)$ and $e^\prime = \min (\pds{e} + {e_H}, {N_e})$.
If we let $N_b = \infty$, then~\eqref{eq:cond-V-b} reduces to
\begin{multline*}
{V_t}(\pds{b} + l, e^\prime, h^\prime ) - {V_t}(\pds{b} - 1 + l, e^\prime, h^\prime ) \\
\leq {V_t}(\pds{b} + 1 + l, e^\prime, h^\prime ) - {V_t}(\pds{b} + l, e^\prime, h^\prime ),
\end{multline*}
which holds by Lemma~\ref{lem:PDSV-to-V}.\ref{item:incr-diff-b}. That concludes the proof.


\subsection{Proof of Proposition~\ref{prop:PDSV-submodular-b-e}}
Consider the value iteration algorithm, which converges for any initial condition. Initialize the PDS value function $\pds{V}_0(\pds{b}, \pds{e}, \pds{h})$ to satisfy~\eqref{eq:PDSV-submodular-b-e}. Assume that~\eqref{eq:PDSV-submodular-b-e} holds for $\pds{V}_t(\pds{b}, \pds{e}, \pds{h})$, for some $t > 0$. We aim to show that~\eqref{eq:PDSV-submodular-b-e} holds for $\pds{V}_{t + 1}(\pds{b}, \pds{e}, \pds{h})$.
Recall from~\eqref{eq:V_to_PDSV} that the PDS value function can be expressed as a function of the conventional value function. The first term on the r.h.s. of~\eqref{eq:V_to_PDSV} is submodular in $(\pds{b}, \pds{e})$. Thus, we only need to show that the expected future value (i.e., the second term on the r.h.s. of~\eqref{eq:V_to_PDSV}) is submodular in $(\pds{b}, \pds{e})$. This is implied by the following condition
\begin{multline}\label{eq:cond-V-b-e}
{V_t}([b'' + 1]^{N_b}, [e'' + 1]^{N_e}, h^\prime ) - {V_t}([b'']^{N_b}, [e'' + 1]^{N_e}, h^\prime )
\\ \leq {V_t}([b'' + 1]^{N_b}, [e'']^{N_e}, h^\prime ) - {V_t}([b'']^{N_b}, [e'']^{N_e}, h^\prime ),
\end{multline}
where we use $[x]^N \triangleq \min(x,N)$, $b'' \triangleq \pds{b}+l$ and $e'' \triangleq \pds{e}+e_H$ to keep the equations compact. To verify that~\eqref{eq:cond-V-b-e} holds, we consider the following two cases.

\textbf{Case 1 ($b'' + 1 \leq N_b$):} Assuming that $b'' + 1 \leq N_b$, we may rewrite~\eqref{eq:cond-V-b-e} as follows:
\begin{multline*}
{V_t}(b'' + 1, [e'' + 1]^{N_e}, h^\prime ) - {V_t}(b'', [e'' + 1]^{N_e}, h^\prime ) \\
\leq {V_t}(b'' + 1, [e'']^{N_e}, h^\prime ) - {V_t}(b'', [e'']^{N_e}, h^\prime ).
\end{multline*}
If $e'' + 1 \leq N_e$, then the condition holds by Lemma~\ref{lem:PDSV-to-V}.\ref{item:submod-b-e} and, if $e'' + 1 > N_e$, then both sides are equal; thus, Case 1 holds.

\textbf{Case 2 ($b'' \geq N_b$):} Assuming that $b'' \geq N_b$, we may rewrite~\eqref{eq:cond-V-b-e} as follows:
\begin{multline*}
{V_t}(N_b, [e'' + 1]^{N_e}, h^\prime ) - {V_t}(N_b, [e'' + 1]^{N_e}), h^\prime ) \\
\leq {V_t}(N_b, [e'']^{N_e}, h^\prime ) - {V_t}(N_b, [e'']^{N_e}, h^\prime ),
\end{multline*}
where both sides are equal to 0; thus, Case 2 holds.
This concludes the proof.

\balance
\bibliographystyle{IEEEtran}
\bibliography{refs}

\begin{thebibliography}{10}
\providecommand{\url}[1]{#1}
\csname url@samestyle\endcsname
\providecommand{\newblock}{\relax}
\providecommand{\bibinfo}[2]{#2}
\providecommand{\BIBentrySTDinterwordspacing}{\spaceskip=0pt\relax}
\providecommand{\BIBentryALTinterwordstretchfactor}{4}
\providecommand{\BIBentryALTinterwordspacing}{\spaceskip=\fontdimen2\font plus
\BIBentryALTinterwordstretchfactor\fontdimen3\font minus
  \fontdimen4\font\relax}
\providecommand{\BIBforeignlanguage}[2]{{%
\expandafter\ifx\csname l@#1\endcsname\relax
\typeout{** WARNING: IEEEtran.bst: No hyphenation pattern has been}%
\typeout{** loaded for the language `#1'. Using the pattern for}%
\typeout{** the default language instead.}%
\else
\language=\csname l@#1\endcsname
\fi
#2}}
\providecommand{\BIBdecl}{\relax}
\BIBdecl

\bibitem{Chakareski:15}
J.~Chakareski, ``Uplink scheduling of visual sensors: When view popularity
  matters,'' \emph{IEEE Trans. Commun.}, vol.~2, no.~63, pp. 510--519, Feb.
  2015.

\bibitem{seyedi2010energy}
A.~Seyedi and B.~Sikdar, ``Energy efficient transmission strategies for body
  sensor networks with energy harvesting,'' \emph{IEEE Trans. Commun.},
  vol.~58, no.~7, pp. 2116--2126, 2010.

\bibitem{vullers2010energy}
R.~J. Vullers, R.~Van~Schaijk, H.~J. Visser, J.~Penders, and C.~Van~Hoof,
  ``Energy harvesting for autonomous wireless sensor networks,'' \emph{IEEE
  Solid State Circuits Mag.}, vol.~2, no.~2, pp. 29--38, 2010.

\bibitem{gurakanenergy}
B.~Gurakan and S.~Ulukus, ``Energy harvesting multiple access channel with data
  arrivals,'' in \emph{IEEE GLOBECOM}, 2015.

\bibitem{lu2014dynamic}
X.~Lu, P.~Wang, D.~Niyato, and E.~Hossain, ``Dynamic spectrum access in
  cognitive radio networks with rf energy harvesting,'' \emph{Wireless
  Commun.}, vol.~21, no.~3, pp. 102--110, 2014.

\bibitem{sharma2010optimal}
V.~Sharma, U.~Mukherji, V.~Joseph, and S.~Gupta, ``Optimal energy management
  policies for energy harvesting sensor nodes,'' \emph{IEEE Trans. Wireless
  Commun.}, vol.~9, no.~4, 2010.

\bibitem{gunduz2014designing}
D.~Gunduz, K.~Stamatiou, N.~Michelusi, and M.~Zorzi, ``Designing intelligent
  energy harvesting communication systems,'' \emph{IEEE Communications
  Magazine}, vol.~52, no.~1, pp. 210--216, 2014.

\bibitem{puterman2014markov}
M.~L. Puterman, \emph{Markov decision processes: discrete stochastic dynamic
  programming}.\hskip 1em plus 0.5em minus 0.4em\relax John Wiley \& Sons,
  2014.

\bibitem{ozel2011transmission}
O.~Ozel, K.~Tutuncuoglu, J.~Yang, S.~Ulukus, and A.~Yener, ``Transmission with
  energy harvesting nodes in fading wireless channels: Optimal policies,''
  \emph{IEEE J. Sel. Areas Commun.}, vol.~29, no.~8, pp. 1732--1743, 2011.

\bibitem{ho2012optimal}
C.~Ho and R.~Zhang, ``Optimal energy allocation for wireless communications
  with energy harvesting constraints,'' \emph{IEEE Trans. Signal Process.},
  vol.~60, no.~9, pp. 4808--4818, 2012.

\bibitem{yang2012optimal1}
J.~Yang and S.~Ulukus, ``Optimal packet scheduling in a multiple access channel
  with energy harvesting transmitters,'' \emph{Journal of Communications and
  Networks}, vol.~14, no.~2, pp. 140--150, 2012.

\bibitem{yang2012optimal}
------, ``Optimal packet scheduling in an energy harvesting communication
  system,'' \emph{IEEE Trans. Commun.}, vol.~60, no.~1, pp. 220--230, 2012.

\bibitem{aprem2013transmit}
A.~Aprem, C.~R. Murthy, and N.~B. Mehta, ``Transmit power control policies for
  energy harvesting sensors with retransmissions,'' \emph{IEEE J. Sel. Topics
  Signal Process.}, vol.~7, no.~5, pp. 895--906, 2013.

\bibitem{toorchi2016fast}
N.~Toorchi, J.~Chakareski, and N.~Mastronarde, ``Fast and low-complexity
  reinforcement learning for delay-sensitive energy harvesting wireless visual
  sensing systems,'' in \emph{IEEE ICIP}, 2016, pp. 1804--1808.

\bibitem{zordandesign}
D.~Zordan, T.~Melodia, and M.~Rossi, ``On the design of temporal compression
  strategies for energy harvesting sensor networks,'' \emph{IEEE Trans.
  Wireless Commun.}, vol.~15, no.~2, pp. 1336--1352, Feb 2016.

\bibitem{mastronarde2011fast}
N.~Mastronarde and M.~van~der Schaar, ``Fast reinforcement learning for
  energy-efficient wireless communication,'' \emph{IEEE Trans. Signal
  Process.}, vol.~59, no.~12, pp. 6262--6266, 2011.

\bibitem{djonin2007mimo}
D.~V. Djonin and V.~Krishnamurthy, ``Mimo transmission control in fading
  channels—- a constrained markov decision process formulation with monotone
  randomized policies,'' \emph{IEEE Trans. Signal Process.}, vol.~55, no.~10,
  pp. 5069--5083, 2007.

\bibitem{bertsekas1987data}
D.~P. Bertsekas, R.~G. Gallager, and P.~Humblet, \emph{Data networks}.\hskip
  1em plus 0.5em minus 0.4em\relax Prentice-hall Englewood Cliffs, NJ, 1987,
  vol.~2.

\end{thebibliography}

\end{document}